\documentclass[10pt,twocolumn,twoside] {IEEEtran} 

\usepackage{amsmath,amssymb,graphicx,epsfig,fancyhdr,amsthm,tabulary}
\usepackage{url,hyperref}
\usepackage{cite}
\usepackage{multirow}
\usepackage{float}
\usepackage{amsfonts}
\usepackage{footnote}
\makesavenoteenv{tabular}
\usepackage{tabulary,bigstrut,array,multirow,booktabs}%
\usepackage{tikz}
\usetikzlibrary{calc}
\usepackage{pdfsync}
\usepackage{bm}
\usepackage{latexsym}
\usepackage{flushend}



\newtheorem{thm}{Theorem}[]

\newtheorem{cor}{Corollary}
\newtheorem{lem}{Lemma}
\newtheorem{pro}{Proposition}
\newtheorem{property}{Property}

\theoremstyle{remark}
\newtheorem{rem}{Remark}

\theoremstyle{definition}
\newtheorem{defn}{Definition}


\newcommand{\tr}{\mathop{\mathrm{tr}}}

\newcommand{\llangle}{\langle \! \langle} 
\newcommand{\rrangle}{\rangle \! \rangle}

\begin{document}

\title{Systematic DFT Frames: Principle, Eigenvalues Structure, and Applications}

\author{\IEEEauthorblockN{Mojtaba Vaezi, {\it Student Member, IEEE,}  and Fabrice Labeau, {\it Senior Member, IEEE}}\\
}


\maketitle

\begin{abstract}
Motivated by a host of recent applications requiring some amount of redundancy,
{\it frames} are becoming a standard tool in the signal processing toolbox. In this paper,
we study a specific class of frames, known as discrete Fourier transform (DFT) codes,
and introduce the notion of {\it systematic}
frames for this class. This is encouraged by a new application of frames, namely,
distributed source coding that uses DFT codes for compression.
Studying their extreme eigenvalues, we show that, unlike DFT frames, systematic
DFT frames are not necessarily {\it tight}. Then, we come up with conditions
for which these frames can be tight. 
In either case, the best and worst systematic frames are established in the minimum mean-squared reconstruction error sense.
Eigenvalues of DFT frames and their subframes play a pivotal role in this work.
Particularly, we derive some bounds on the extreme eigenvalues DFT subframes
which are used to prove most of the results; these bounds are valuable independently.

\end{abstract}

\begin{keywords}
BCH-DFT codes, systematic frames, parity,
eigenvalue, optimal reconstruction, quantization, erasures, distributed source coding, Vandermonde matrix.
\end{keywords}

\IEEEpeerreviewmaketitle

\section{Introduction}
{\let\thefootnote\relax\footnotetext{
Copyright (c) 2013 IEEE. Personal use of this material is permitted.
However, permission to use this material for any other purposes must
be obtained from the IEEE by sending a request to pubs-permissions@ieee.org.

This work was supported by Hydro-Qu\'{e}bec, the Natural Sciences and Engineering
Research Council of Canada and McGill University in the framework of the
NSERC/Hydro-Qu\'{e}bec/McGill Industrial Research Chair in Interactive
Information Infrastructure for the Power Grid.
Part of the material in this paper was presented at the International Symposium on Information
Theory, Boston, MA, July 2012 \cite{Vaezi2012frame}. 	

The authors are with the Department of Electrical and Computer
Engineering, McGill University, Montreal, QC H3A 0E9, Canada (e-mail: 	
mojtaba.vaezi@mail.mcgill.ca; fabrice.labeau@mcgill.ca). 	
}}

\IEEEPARstart{F}{rames}, ``redundant" set of vectors used for signal representation,
 are increasingly found in signal processing applications.
Frames are more general than bases as frames are complete but not necessarily linearly independent.
A {\it basis}, on the contrary, is a set of vectors  used
to ``uniquely" represent a vector as a linear combination of basis elements.
Frames are generally motivated by applications requiring
some level of {\it redundancy}, and
they offer flexibility in design, resilience to additive noise (including quantization error),
robustness to erasure (loss), and numerical stability of reconstruction. 
With increasing applications, frames are becoming more prevalent in signal processing.

 In this paper, we study a specific
class of frames known as discrete Fourier transform (DFT) codes.
By using these codes, the ideas of coding theory are
described in the signal processing setting. We consider
the Bose-Chaudhuri-Hocquenghem (BCH) codes, an important class of multiple-error-correcting codes,
in the DFT domain \cite{blahut2003algebraic,Marshall,wolf1983redundancy}.
BCH-DFT codes are {\it cyclic codes} in the complex (or real) domain,
similar to BCH codes in the binary error correction
setting.
Therefore, their codewords have certain successive spectral
components equal to zero. This property is then exploited for error detection
and correction in the complex (real) field
\cite{Marshall,marvasti1999efficient,blahut2003algebraic,wolf1983redundancy,rath2004subspace,rath2004subspace1,gabay2007joint,Takos}.

From the frame theory perspective, DFT codes are harmonic tight frames.
In the absence of erasure, tight frames minimize the mean-squared error (MSE) between
the transmitted and received signals \cite{goyal2001quantized, rath2004frame, kovacevic2007life}.
The MSE is the ultimate measure of performance in many digital communication systems
where quantized analog signal is transmitted.
Frames are naturally robust to transmission loss since they provide an overcomplete expansion of signal
\cite{goyal2001quantized,casazza2003equal, kovacevic2007life, rath2004frame, kovacevic2008introduction}.

DFT frames have recently been applied in the context of {\it distributed source coding} (DSC) \cite{vaezi2011DSC}.
More precisely, BCH-DFT codes are used for compression of analog signals with side information
available at the decoder.
In DSC context, side information is viewed as corrupted version of signal, and
compression is achieved by sending only redundant information,
in the form of {\it parity} or {\it syndrome}, with respect to a channel code \cite{dragotti2009distributed}.
Unlike in DSC that uses binary channel codes for compression,
in the new framework (DSC based on BCH-DFT codes) compression is performed before quantization.
As a result, DFT frames, which are primarily used for compression, can decrease quantization error at the same time.
This results in a better reconstruction, in the MSE sense, particularly
when the sources are highly correlated.

Motivated by its application in parity-based DSC \cite{vaezi2011DSC}
and {\it distributed joint source-channel coding} (DJSCC) \cite{vaezi2012WZ}
that use DFT codes, we introduce the notion of {\it systematic frames},
in this work.
For an $(n,k)$ frame, a systematic frame is defined to be a frame that includes the identity matrix of size $k$ as a subframe.
Since {\it tight} frames minimize reconstruction error \cite{goyal2001quantized,kovacevic2007life,rath2004frame,casazza2003equal},
we explore {\it systematic tight DFT frames}.
Although it is straightforward to construct systematic DFT frames, we prove that systematic
``tight'' DFT frames exist only for specific DFT frames.
More precisely, we show that a systematic frame is tight if and only if
data (systematic) samples are circularly equally spaced, in the codewords
generated by that frame.
When such a frame does not exist, we will be looking for systematic DFT frames with the ``best''
performance, from the minimum mean-squared reconstruction error sense. 
We also demonstrate which systematic frames are the ``worst'' in this sense.
In addition, we show that circular shift and reversal of the vectors
in a DFT frame does not change the eigenvalues of
the frame operator. We use these properties to categorize different
systematic frames of an $(n,k)$ DFT frame
based on their performance.

Another main contribution of this paper is to find bounds on the {\it extreme eigenvalues} of
$V^HV$, where $V$ is a square or non-square subframe of a DFT frame.
The properties of the eigenvalues of such frames are central to establish
many of the result in this paper.
These bounds are used to determine the conditions
required for a systematic frame so as to be tight.
Besides, eigenvalues are crucial in establishing the best and worst systematic frames.

The paper is organized as follows. In Section~\ref{sec:def}, we present the basic definitions
and a few fundamental lemmas that will be used later in the paper.
In Section~\ref{sec:DFTcodes}, we introduce DFT frames and set
the ground to study the extreme eigenvalues of their subframes.
Section~\ref{sec:sys} motivates the work in this paper by introducing
systematic DFT frames and their application. Some result on the the extreme
eigenvalues of DFT frames and their subframes are presented in Section~\ref{sec:eig}.
Sections~\ref{sec:perform} and \ref{sec:class} is devoted to the evaluation of reconstruction error
and classification of systematic frames based on that. We conclude in Section~\ref{sec:con}.

For notation, we use boldface lower-case letters for vectors, boldface upper-case letters for
matrices,  $(.)^T$ for transpose, $(.)^H$ for conjugate transpose, $(.)^\dagger$ for 
pseudoinverse, $(.)^\ast$ for conjugate, $\tr(.)$ for the trace, $\mathbb{E(.)}$ for the mathematical expectation, and $\| . \|$ for
the Euclidean norm.
The dimensions of square and rectangular matrices are indicated, respectively, by one and two subscripts when required.

\section{Definitions and preliminaries}
\label{sec:def}

In this section, we introduce the definitions and some
basic results which are frequently used in the paper.
\begin{defn}
A spanning family of $n$ vectors $F = \{\bm{f}_i\}_{i=1}^{n}$ in a $k$-dimensional complex vector space
${\mathbb C^k}$ is called a {\it frame} if there exist $0 <a \leq b $
such that for any $\bm{x} \in {\mathbb C^k}$
\begin{align}
a \| \bm{x} \|^2 \leq \sum_{\substack i=1}^{n}{| \langle \bm{x} , \bm{f}_i  \rangle |^2} \leq b \| \bm{x} \|^2,
 \label{eq:frame}
 \end{align}
where $\langle \bm{x} , \bm{f}_i  \rangle$ denotes the inner
product of $\bm{x}$ and $\bm{f}_i $ and gives the $i$th coefficient for the frame
expansion of $\bm{x}$ \cite{kovacevic2007life,casazza2003equal,kovacevic2008introduction}.
$a$ and $b$ are called {\it frame bounds}; they, respectively, ensure that the vectors span the space,
and the basis expansion converges.
A frame is {\it tight} if $a=b$.
{\it Uniform} or {\it equal-norm} frames are frames with same norm for all elements, i.e.,
$\|\bm{f}_i\|=\|\bm{f}_j\|$, for $i,j = 1,\hdots,n$. 

\end{defn}

\begin{defn}
An $n\times n$ Vandermonde matrix with unit complex entries is defined by
\begin{align}
W \triangleq \frac{1}{\sqrt{n}}\left( \begin{array}{ccccccc}
       1   &1 & \cdots & 1 \\
       e^{j\theta_1}   &e^{j\theta_2} & \cdots & e^{j\theta_n} \\
       \vdots & \vdots & \ddots & \vdots \\
       e^{j(n-1)\theta_1}   &e^{j(n-1)\theta_2}  & \cdots &e^{j(n-1)\theta_n}  \\
      \end{array}
\right),  \label{eq:Vand2}
\end{align}
in which $\theta_p \in [0, 2 \pi) $ and $\theta_p \neq \theta_q $ for $p\neq q$, $1\leq p,q \leq n$.
If $\theta_p = \frac{2\pi}{n}(p-1)$, $W$ becomes the well-known IDFT matrix \cite{mitra1998digital}.
For this Vandermonde matrix we can write \cite{tucci2011eigenvalue2}, \cite{tucci2011eigenvalue}
\begin{align}
 \det(WW^H) = |\det(W)|^2 =\frac{1}{n^n}\prod_{\substack 1\leq p<q\leq n}{|e^{i\theta_p}-e^{i\theta_q}|^2}.
 \label{eq:detV0}
 \end{align}
\end{defn}

Central to this work is the properties of the {\it eigenvalues} of $V^HV$ or $VV^H$,
in which $V$ is a submatrix of a DFT matrix.\footnote{Note that eigenvalues of $V^H V$ and $VV^H$
are equal for a square $V$; also, $V^H V$ and $VV^H$ have the same nonzero eigenvalues for a non-square $V$.}
Hence, we recall some bounds on the eigenvalues of Hermitian matrices which are used in this paper.
Let $A$ be a Hermitian $k\times k$ matrix with real eigenvalues
$\{\lambda_{1}(A), \hdots, \lambda_{k}(A)\}$ which are collectively called the {\it spectrum} of $A$, and
assume $\lambda_{1}(A) \geq \lambda_{2}(A)\geq \cdots \geq \lambda_{k}(A)$.
Schur-Horn inequalities show to what extent the eigenvalues of
a Hermitian matrix constraint its diagonal entries.
\begin{pro}\textit{Schur-Horn inequalities} \cite{seber2008matrix} \\
Let $A$ be a Hermitian $k\times k$ matrix with real eigenvalues
$\lambda_{1}(A) \geq \lambda_{2}(A)\geq \cdots \geq \lambda_{k}(A).$
Then,  for any $1 \leq i_1 < i_2< \cdots < i_l \leq k$,
\begin{align}
 \lambda_{k-l+1}(A) + \cdots + \lambda_{k}(A) &\leq
 a_{i_1i_1}+ \cdots + a_{i_li_l} \nonumber \\
 &\leq  \lambda_{1}(A) + \cdots + \lambda_{l}(A),
 \label{eq:Schur-Horn}
 \end{align}
 where $ a_{11}, \hdots, a_{kk}$ are the diagonal entries of $A$.
Particularly, for $l=1$  and $l=k$ we obtain
 \begin{align}
 \label{eq:Schur-Horn1}
 \lambda_{k}(A) \leq  a_{11} \leq  \lambda_{1}(A), \\
  \sum_{\substack i=1}^{k}{\lambda_{i}(A)} = \sum_{\substack i=1}^{k}{a_{ii}}.
 \label{eq:Schur-Horn2}
 \end{align}
\end{pro}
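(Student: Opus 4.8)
The plan is to reduce the inequalities to an elementary estimate between the eigenvalues of $A$ and a weighted average of them, with the weights furnished by the unitary diagonalizer. Since $A$ is Hermitian, write $A=U\Lambda U^{H}$ with $U$ unitary and $\Lambda=\mathrm{diag}(\lambda_{1}(A),\hdots,\lambda_{k}(A))$. Reading off the $m$th diagonal entry gives $a_{mm}=\sum_{p=1}^{k}|u_{mp}|^{2}\lambda_{p}(A)$, so every diagonal entry is a convex combination of the eigenvalues, the weights $|u_{mp}|^{2}$ summing to $1$ over $p$ because the rows of $U$ are unit vectors.

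First I would fix the indices $i_{1}<\cdots<i_{l}$ and add these $l$ relations, which yields $\sum_{j=1}^{l}a_{i_{j}i_{j}}=\sum_{p=1}^{k}c_{p}\lambda_{p}(A)$ with $c_{p}:=\sum_{j=1}^{l}|u_{i_{j}p}|^{2}$. Two properties of the weight vector $c=(c_{1},\hdots,c_{k})$ are then needed: $0\le c_{p}\le 1$ for every $p$, since $c_{p}$ is a partial column sum of the numbers $|u_{mp}|^{2}$ whose full column sum is $1$ by orthonormality of the columns of $U$; and $\sum_{p=1}^{k}c_{p}=l$, obtained by interchanging the order of summation and using that the rows $i_{1},\hdots,i_{l}$ of $U$ have unit norm. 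Hence the proposition reduces to showing that over the polytope $\{c:\,0\le c_{p}\le 1,\ \sum_{p}c_{p}=l\}$ the linear functional $\sum_{p}c_{p}\lambda_{p}(A)$ attains its maximum at $c=(1,\hdots,1,0,\hdots,0)$ and its minimum at $c=(0,\hdots,0,1,\hdots,1)$, given $\lambda_{1}(A)\ge\cdots\ge\lambda_{k}(A)$.

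For the upper bound I would estimate directly rather than invoke linear-programming extreme-point arguments. Put $S:=\sum_{p=1}^{l}(1-c_{p})=\sum_{p=l+1}^{k}c_{p}\ge 0$, the two expressions being equal because $\sum_{p}c_{p}=l$. Then $\sum_{p=1}^{l}\lambda_{p}(A)-\sum_{p=1}^{k}c_{p}\lambda_{p}(A)=\sum_{p=1}^{l}(1-c_{p})\lambda_{p}(A)-\sum_{p=l+1}^{k}c_{p}\lambda_{p}(A)\ge\lambda_{l}(A)S-\lambda_{l}(A)S=0$, where the first sum is bounded below using $\lambda_{p}(A)\ge\lambda_{l}(A)$ and $1-c_{p}\ge 0$, and the second is bounded above using $\lambda_{p}(A)\le\lambda_{l}(A)$ and $c_{p}\ge 0$. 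The lower bound follows by the mirror-image estimate anchored at $\lambda_{k-l+1}(A)$, or equivalently by applying the upper bound to $-A$. Finally, $l=1$ is exactly \eqref{eq:Schur-Horn1}, and $l=k$ forces $c_{p}\equiv 1$, giving $\sum_{p}a_{pp}=\sum_{p}\lambda_{p}(A)$ as in \eqref{eq:Schur-Horn2}, which is also immediate from $\tr(A)=\tr(\Lambda)$.

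I expect the only point requiring care to be the bookkeeping: matching each weight constraint to the correct orthonormality relation (columns of $U$ for $0\le c_{p}\le 1$, rows of $U$ for $\sum_{p}c_{p}=l$), and picking the anchoring eigenvalue --- $\lambda_{l}(A)$ for the upper bound and $\lambda_{k-l+1}(A)$ for the lower --- so that the two one-sided estimates in the chain pull in the same direction. No tool beyond the spectral theorem enters.
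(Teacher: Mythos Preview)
Your proof is correct. The spectral decomposition $A=U\Lambda U^{H}$, the identification of each diagonal entry as a convex combination $a_{mm}=\sum_{p}|u_{mp}|^{2}\lambda_{p}(A)$, the two weight constraints $0\le c_{p}\le 1$ and $\sum_{p}c_{p}=l$ coming from column and row orthonormality of $U$, and the direct anchoring estimate at $\lambda_{l}(A)$ (respectively $\lambda_{k-l+1}(A)$) all go through exactly as you describe.

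As for comparison with the paper: the paper does not prove this proposition at all. It is stated as a standard result with a citation to Seber's matrix handbook and then used as a tool in the proofs of Theorems~\ref{thm1}--\ref{thm4}. So you have supplied a self-contained elementary argument where the paper simply quotes the literature. Your approach is the classical doubly-stochastic/majorization proof of Schur's theorem, phrased without explicitly naming the Birkhoff polytope or majorization; it is entirely appropriate and arguably more informative than a bare citation.
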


 Another basic question in linear algebra asks the degree to which the eigenvalues
 of two Hermitian matrices constrain the eigenvalues of their sum.
Weyl's theorem gives an answer to this question in the following set of inequalities.

\begin{pro}{Weyl inequalities \cite{seber2008matrix}} \\
Let $A$ and $B$ be two Hermitian $k\times k$ matrices with spectrums $\{\lambda_1(A),\ldots,\lambda_k(A)\}$ and $\{\lambda_1(B),\ldots,\lambda_k(B)\}$, respectively.
Then, for $i,j\leq k$, we have
\begin{align}
\label{eq:Weyl1}
 \lambda_{i}(A+B) \leq  \lambda_{j}(A) + \lambda_{i-j+1}(B) \qquad \text{for} \quad j \leq i, \\
 \lambda_{i}(A+B) \geq  \lambda_{j}(A) + \lambda_{k+i-j}(B) \qquad \text{for} \quad j \geq i.
 \label{eq:Weyl2}
 \end{align}
 \end{pro}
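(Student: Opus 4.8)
The plan is to obtain both families of inequalities from the Courant--Fischer variational (min--max) characterization of the eigenvalues of a Hermitian matrix, combined with an elementary dimension count for intersections of eigenspaces. Recall that for a Hermitian $k\times k$ matrix $M$ with ordered spectrum $\lambda_1(M)\geq\cdots\geq\lambda_k(M)$ and any $1\leq i\leq k$,
\begin{align}
\lambda_i(M)&=\min_{\dim S=k-i+1}\ \max_{\substack{x\in S\\ x\neq 0}}\frac{x^HMx}{x^Hx} \nonumber\\
&=\max_{\dim S=i}\ \min_{\substack{x\in S\\ x\neq 0}}\frac{x^HMx}{x^Hx},
\label{eq:CF}
\end{align}
the extrema being taken over subspaces $S\subseteq\mathbb{C}^k$ of the indicated dimension.

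First I would prove the upper bound \eqref{eq:Weyl1}. Fix $j\leq i$, let $E_A$ be the span of the eigenvectors of $A$ associated with the $k-j+1$ smallest eigenvalues $\lambda_j(A),\ldots,\lambda_k(A)$, so that $x^HAx\leq\lambda_j(A)\,x^Hx$ for all $x\in E_A$, and let $E_B$ be the span of the eigenvectors of $B$ associated with $\lambda_{i-j+1}(B),\ldots,\lambda_k(B)$, of dimension $k-i+j$, so that $x^HBx\leq\lambda_{i-j+1}(B)\,x^Hx$ for all $x\in E_B$. A dimension count gives $\dim(E_A\cap E_B)\geq\dim E_A+\dim E_B-k=k-i+1\geq 1$, so I may pick a subspace $S\subseteq E_A\cap E_B$ with $\dim S=k-i+1$. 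Every nonzero $x\in S$ then satisfies $x^H(A+B)x\leq\bigl(\lambda_j(A)+\lambda_{i-j+1}(B)\bigr)x^Hx$, and feeding this particular $S$ into the first (min-over-subspaces) expression of \eqref{eq:CF} applied to $M=A+B$ yields $\lambda_i(A+B)\leq\lambda_j(A)+\lambda_{i-j+1}(B)$.

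Then the lower bound \eqref{eq:Weyl2} follows by duality: I would apply the just-proved inequality \eqref{eq:Weyl1} to the Hermitian matrices $-A$ and $-B$, with the index pair $(k-i+1,\,k-j+1)$ playing the role of $(i,j)$, and use the identity $\lambda_m(-M)=-\lambda_{k-m+1}(M)$. Cancelling the common sign and simplifying the shifted indices turns the hypothesis $j\leq i$ into $j\geq i$, restores the $A$-index to $j$, and converts the $B$-index into $k+i-j$, which is exactly \eqref{eq:Weyl2}. Alternatively, one may rerun the eigenspace argument verbatim, this time invoking the second (max-over-subspaces) formula in \eqref{eq:CF}.

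The only step that needs genuine care---and hence the main obstacle---is the index bookkeeping: one must check that under $1\leq j\leq i\leq k$ the dimensions $k-j+1$ and $k-i+j$ both lie in $\{1,\ldots,k\}$, so that $E_A$, $E_B$ and $S$ are well defined, and that the index substitution in the duality step keeps all indices within their admissible ranges. Once these routine checks are in place the whole argument is a direct application of \eqref{eq:CF}, which is why the result is customarily quoted without proof, as in \cite{seber2008matrix}.
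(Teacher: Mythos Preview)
Your argument is correct: the Courant--Fischer min--max principle combined with the dimension count on the intersection of eigenspaces is the standard way to establish Weyl's inequalities, and your index bookkeeping (including the duality step sending $(i,j)$ to $(k-i+1,k-j+1)$ under $M\mapsto -M$) is accurate.

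As for comparison with the paper: there is nothing to compare. The paper states Proposition~2 as a quoted result from \cite{seber2008matrix} and does not supply a proof; it is used purely as a tool (for Corollary~\ref{cor1} and later for Theorems~\ref{thm2} and~\ref{thm4}). So your write-up actually goes beyond what the paper offers for this particular statement.
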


\begin{cor}
 If  $A+B=\gamma I_k, \gamma >0,$ where $A$ and $B$ are Hermitian matrices, then $\lambda_{j}(A) + \lambda_{k-j+1}(B) = \gamma.$
 \label{cor1}
\end{cor}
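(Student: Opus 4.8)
The plan is to deduce this directly from the Weyl inequalities stated in the preceding proposition, exploiting the fact that $\gamma I_k$ has the completely degenerate spectrum $\lambda_1(\gamma I_k)=\cdots=\lambda_k(\gamma I_k)=\gamma$. Write $C=A+B=\gamma I_k$, so $\lambda_i(C)=\gamma$ for every $i$. The idea is to apply \eqref{eq:Weyl1} and \eqref{eq:Weyl2} with a judicious choice of indices to squeeze $\lambda_j(A)+\lambda_{k-j+1}(B)$ from both sides and force equality.

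First I would get the upper bound. In \eqref{eq:Weyl2}, which reads $\lambda_i(A+B)\geq \lambda_j(A)+\lambda_{k+i-j}(B)$ for $j\geq i$, take $i=1$ and $j$ arbitrary (so $j\geq 1=i$ holds): this gives $\gamma=\lambda_1(\gamma I_k)\geq \lambda_j(A)+\lambda_{k+1-j}(B)$, i.e. $\lambda_j(A)+\lambda_{k-j+1}(B)\leq\gamma$. Next I would get the matching lower bound from \eqref{eq:Weyl1}, $\lambda_i(A+B)\leq\lambda_j(A)+\lambda_{i-j+1}(B)$ for $j\leq i$. Here I want the second subscript on $B$ to come out as $k-j+1$; setting $i-j+1=k-j+1$ forces $i=k$, and then $j\leq i=k$ is automatic. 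So with $i=k$ and the same $j$, \eqref{eq:Weyl1} yields $\gamma=\lambda_k(\gamma I_k)\leq\lambda_j(A)+\lambda_{k-j+1}(B)$. Combining the two inequalities gives $\lambda_j(A)+\lambda_{k-j+1}(B)=\gamma$ for every $j$ with $1\leq j\leq k$, which is exactly the claim.

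There is essentially no obstacle here beyond bookkeeping; the only thing to be careful about is checking that the index constraints ($j\leq i$ in \eqref{eq:Weyl1}, $j\geq i$ in \eqref{eq:Weyl2}) are satisfied for the chosen specializations $i=k$ and $i=1$ respectively, which they are for all admissible $j$. An alternative, equally short route is to note that $B=\gamma I_k-A$ is a shift-and-reflect of $A$: its eigenvalues are $\{\gamma-\lambda_i(A)\}$, and since reversing the order turns the sorted list of $\lambda_i(A)$ into the sorted list of $\gamma-\lambda_i(B)$, one reads off $\lambda_{k-j+1}(B)=\gamma-\lambda_j(A)$ directly. I would present the Weyl-inequality argument as the main proof since it illustrates the use of the proposition just stated, and perhaps remark on the eigenvalue-shift viewpoint as intuition. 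The hardest part, such as it is, is simply resisting the temptation to overcomplicate a one-line corollary.
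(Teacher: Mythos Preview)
Your proof is correct and follows essentially the same approach as the paper: set $i=k$ in \eqref{eq:Weyl1} and $i=1$ in \eqref{eq:Weyl2}, use $\lambda_k(A+B)=\lambda_1(A+B)=\gamma$, and combine the two resulting inequalities. Your write-up is in fact more careful about verifying the index constraints than the paper's own terse version.
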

 \begin{proof}
 It suffice to set $i=k$ and $i=1$ respectively in \eqref{eq:Weyl1} and \eqref{eq:Weyl2},
 and use $ \lambda_{k}(A+B) = \lambda_{1}(A+B)=\gamma $ which is obtained from $A+B=\gamma I_k$.
 \end{proof}

\begin{lem}
Let $A$ and $B$ be two Hermitian $k\times k$ matrices
and suppose that, for every $1 \leq i,j \leq k$, $A_{i,j} =e^{j\theta_i}B_{i,j}$; then
$A^HA$ and $B^HB$ have the same spectrum.
\label{lem0}
\end{lem}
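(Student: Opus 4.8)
The plan is to read the hypothesis as a statement about \emph{left multiplication by a diagonal unitary matrix}, after which the claim is essentially immediate. Concretely, I would set $D \triangleq \mathrm{diag}(e^{j\theta_1},\ldots,e^{j\theta_k})$. The condition $A_{i,j}=e^{j\theta_i}B_{i,j}$ for all $1\leq i,j\leq k$ says exactly that the $i$th row of $A$ is $e^{j\theta_i}$ times the $i$th row of $B$, i.e. $A=DB$.

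Next I would observe that $D$ is unitary: since $|e^{j\theta_i}|=1$ for every $i$, we have $D^HD=\mathrm{diag}(|e^{j\theta_1}|^2,\ldots,|e^{j\theta_k}|^2)=I_k$. Substituting $A=DB$ then gives
\begin{align}
A^HA=(DB)^H(DB)=B^HD^HDB=B^HB, \nonumber
\end{align}
so $A^HA$ and $B^HB$ are in fact the \emph{same} matrix, and a fortiori have the same spectrum. (The analogous statement for a column-wise scaling $A=BD$ follows the same way: there $A^HA=D^HB^HBD$ is a unitary conjugate of $B^HB$, hence still isospectral with it.)

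There is essentially no obstacle here; the only thing to notice is that rescaling the rows of a matrix by unit-modulus scalars amounts to left multiplication by a diagonal unitary, which cancels identically in the product $A^HA$. I would also remark that the Hermitian hypothesis on $A$ and $B$ is not needed for this argument — it only restricts which pairs $(A,B)$ can satisfy the stated entrywise relation — so the conclusion carries over verbatim to arbitrary (in particular rectangular) matrices whose rows differ by unit-modulus factors, which is the form in which it will be applied to subframes of a DFT frame.
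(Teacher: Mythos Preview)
Your proposal is correct and essentially identical to the paper's proof: both show directly that $A^HA=B^HB$ (the paper does it entrywise, you do it via $A=DB$ with $D$ diagonal unitary, which is the same computation packaged matricially). Your additional remarks that the Hermitian hypothesis is unnecessary and that the result extends to rectangular matrices are accurate and indeed relevant to how the lemma is later applied to DFT subframes.
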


\begin{proof}
The proof is immediate using Lemma~3 \cite{tucci2011eigenvalue} since
$(A^HA)_{i,j}=\frac{e^{j\theta_i}}{e^{j\theta_i}}(B^HB)_{i,j}$; i.e., $A^HA=B^HB.$
  \end{proof}

\section{DFT Frames}
\label{sec:DFTcodes}
\subsection{BCH-DFT Codes}

BCH-DFT codes \cite{Marshall} are {\it linear block codes} over the complex field
whose parity-check matrix $H$ is defined based on the DFT matrix; they insert $d-1$ cyclically adjacent
zeros in the frequency-domain function (Fourier transform) of any codeword where $d$ is the designed
distance of that code \cite{blahut2003algebraic}.
Real BCH-DFT codes, a subset of complex BCH-DFT codes, benefit from
a generator matrix with real entries.
The generator matrix of an $(n,k)$ {\it real BCH-DFT
code}\footnote{Real BCH-DFT codes do not exist  when $n$ and $k$ are both even \cite{Marshall}.}
is typically defined by \cite{gabay2007joint,rath2004subspace,rath2004frame,vaezi2011DSC}
\begin{align}
G= \sqrt{\frac{n}{k}} W_n^H \Sigma W_k,
\label{eq:G1}
\end{align}

\noindent in which $W_l$ represents the DFT matrix of size $l$, and
$\Sigma$ is defined as
\begin{align}
\Sigma_{n\times k} = \left( \begin{array}{ccccccc}
       I_\alpha  & \bm{0}  \\
       \bm{0}   & \bm{0}  \\
      \bm{0}    &   I_\beta    \\
      \end{array}
\right), \label{eq:cov}
\end{align}
where $\alpha = \lceil n/2\rceil  -\lfloor (n-k)/2\rfloor$, $\beta=k-\alpha$,
and the sizes of zero blocks are such that $\Sigma$ is an $n\times k$ matrix \cite{gabay2007joint}.
One can check that $\Sigma^H \Sigma= I_k$, and $\Sigma \Sigma^H $ is an $n\times n $ matrix given by
\begin{align}
\Sigma\Sigma^H = \left( \begin{array}{ccccccc}
       I_\alpha  & \bm{0} & \bm{0}  \\
       \bm{0}   & \bm{0} & \bm{0} \\
      \bm{0}    & \bm{0} &   I_\beta    \\
      \end{array}
\right). \label{eq:cov1}
\end{align}
Note that, having $n-k$ consecutive zero rows, $\Sigma$ inserts $n-k$ consecutive zeros to
each codeword in the frequency domain which ensures having a BCH code \cite{blahut2003algebraic,Marshall}.

\begin{rem}
Removing $W_k$ from \eqref{eq:G1} we end up with a complex $G$, representing a {\it complex BCH-DFT code}. In such a code,
$\alpha$ and $\beta$ can be any nonnegative integers such that $\alpha + \beta =k$.
\label{rem1}
\end{rem}
The parity-check matrix $H$, both in real and complex codes, consist of the $n-k$ columns of $W_n^H$
corresponding to the zero rows of $\Sigma$; thus, $HG=0.$

\subsection{Connection to Frame Theory}
The generator matrix $G$ in \eqref{eq:G1} can be viewed as an {\it analysis frame operator}.
In this view, a real BCH-DFT code is a rotation of the well-known {\it harmonic frames} \cite{casazza2003equal,kovacevic2008introduction},
and a complex BCH-DFT code is basically a harmonic frame.
The latter can be understood by removing $W_k$ from \eqref{eq:G1} which results in
a complex BCH-DFT code, on the one hand, and the analysis frame operator of
a harmonic frame, on the other hand. The former is then evident as $W_k$ is a rotation matrix.
Further, it is easy to see that
the {\it frame operator} $G^HG$ and {\it Gramian} $GG^H$ are equal to
\begin{align}
\label{eq:GHG}
G^HG &= \frac{n}{k} I_k,\\
GG^H&= \frac{n}{k} W_n^H \Sigma \Sigma^HW_n.
\label{eq:GGH}
\end{align}

The following lemma presents some properties of the frame operator and relevant matrices which are
crucial for our results in this paper. 

\begin{lem}
Let $ G_{p\times k}$ be a matrix consisting of $p$ arbitrary rows of $G$ defined by \eqref{eq:G1}.
Then, the following statements hold:\\
i. $GG^H$ is a Toeplitz and circulant matrix\\
ii. $ G_{p\times k}G_{p\times k}^H, 1< p < n$ is a Toeplitz matrix\\
iii. All principal diagonal entries of $ G_{p\times k}G_{p\times k}^H, 1\leq p \leq n$ are equal to $1$.
\label{lem1}
\end{lem}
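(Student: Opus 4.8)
The plan is to exploit the structure $G = \sqrt{n/k}\, W_n^H \Sigma W_k$ together with the observation that $W_k$ is unitary, so that for any selection of rows the relevant Gram matrix is unaffected by the trailing $W_k$ factor. Concretely, let $S\Sigma$ denote the $p\times k$ matrix formed by the chosen $p$ rows of $W_n^H\Sigma$ (equivalently, $S$ is a $p\times n$ row-selection matrix and we look at $S W_n^H \Sigma W_k$). Then $G_{p\times k}G_{p\times k}^H = \frac{n}{k}\, S W_n^H \Sigma W_k W_k^H \Sigma^H W_n S^H = \frac{n}{k}\, S W_n^H (\Sigma\Sigma^H) W_n S^H$, using $W_k W_k^H = I_k$. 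So every statement reduces to understanding the matrix $M \triangleq \frac{n}{k} W_n^H (\Sigma\Sigma^H) W_n$, which is exactly $GG^H$ in \eqref{eq:GGH}, and its principal submatrices obtained by selecting rows/columns indexed by the same set.

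For part (i), I would compute the $(r,s)$ entry of $M$ directly. Since $\Sigma\Sigma^H$ in \eqref{eq:cov1} is diagonal with ones in positions belonging to a set $Z$ of size $k$ (the $\alpha$ "low" frequencies and the $\beta$ "high" frequencies) and zeros elsewhere, we get $(GG^H)_{r,s} = \frac{1}{k}\sum_{m\in Z} e^{\,j\frac{2\pi}{n}m(r-s)}$, which depends on $r,s$ only through $r-s \pmod n$; hence $GG^H$ is circulant (and in particular Toeplitz). This is the only place the specific index set $Z$ of consecutive-zeros structure is used, and it is a short trigonometric-sum computation. For part (ii), taking $p$ arbitrary rows means selecting a set $R$ of $p$ indices; the $(r,s)$ entry of $G_{p\times k}G_{p\times k}^H$ is $(GG^H)_{R(r),R(s)} = \frac{1}{k}\sum_{m\in Z} e^{\,j\frac{2\pi}{n}m(R(r)-R(s))}$. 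To get the Toeplitz property one must be careful: a general subset $R$ of rows of a circulant matrix is \emph{not} Toeplitz. I expect the intended statement uses the fact that in the BCH-DFT construction one takes $p$ \emph{consecutive} rows (or that "arbitrary" here is loose and means an arbitrary contiguous block, as is standard for subframes of DFT codes); for a contiguous block $R(r) = r_0 + r$, the difference $R(r)-R(s) = r-s$ is again a function of $r-s$ only, so the submatrix is Toeplitz. The main obstacle is pinning down exactly this point — whether "arbitrary $p$ rows" is to be read literally — and I would resolve it by noting that the relevant subframes in the paper are always contiguous row-blocks of $G$, so the Toeplitz claim holds for those; the circulant property is genuinely lost for a strict subframe because wrap-around symmetry is broken, which is why (i) asserts circulant only for the full $p=n$ case.

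For part (iii), the diagonal entries: $(G_{p\times k}G_{p\times k}^H)_{r,r} = \frac{1}{k}\sum_{m\in Z} e^{0} = \frac{|Z|}{k} = \frac{k}{k} = 1$ for every $r$ and every $1\le p\le n$, since there is no dependence on which row was selected. Equivalently, each row of $G$ has squared Euclidean norm $1$ because $G$ is (a rotation of) a harmonic tight frame with frame bound $n/k$ normalized so that the analysis vectors have unit norm — indeed $\|g_r\|^2 = (GG^H)_{r,r}$, and $\sum_r \|g_r\|^2 = \tr(GG^H) = \tr(G^HG) = \tr(\frac{n}{k}I_k) = n$, which by symmetry (all rows equivalent under the circulant structure) forces each $\|g_r\|^2 = n/n = 1$. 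I would present the direct computation as the clean argument and mention the trace/symmetry argument as an alternative. Overall there is no deep obstacle here; the only care needed is the reading of "arbitrary rows" in (ii), and the key reusable observation is the collapse $G_{p\times k}G_{p\times k}^H = \frac{n}{k} S W_n^H \Sigma\Sigma^H W_n S^H$ obtained by cancelling $W_k$.
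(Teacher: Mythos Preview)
Your approach is essentially the same as the paper's: both compute the $(r,s)$ entry of $GG^H$ directly as $\frac{1}{k}\sum_{m\in Z}e^{j\frac{2\pi}{n}m(r-s)}$ and read off the Toeplitz/circulant/unit-diagonal properties from that formula. Your preliminary step of cancelling $W_k$ via $W_kW_k^H=I_k$ to reduce everything to principal submatrices of $\frac{n}{k}W_n^H\Sigma\Sigma^H W_n$ is a clean way to organize the computation; the paper does the same thing implicitly without stating it.

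You have, however, put your finger on a real issue that the paper glosses over. Your observation that a principal submatrix of a circulant matrix indexed by an \emph{arbitrary} subset of rows is not Toeplitz is correct: with rows $r_1<r_2<\cdots<r_p$, the $(i,j)$ entry depends on $r_i-r_j$, which is a function of $i-j$ only when the $r_i$ lie in arithmetic progression. The paper's proof simply asserts ``Similarly, one can see that \ldots\ $G_{p\times k}G_{p\times k}^H$ is also a Toeplitz matrix'' without confronting this, so the lemma as literally stated (``$p$ arbitrary rows'') overreaches for part~(ii). Your proposed resolution---read part~(ii) as applying to contiguous (or equispaced) row blocks---is the correct one. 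It is worth noting that none of the subsequent results in the paper actually use part~(ii); Theorems~1--4 and the MSE computations rely only on part~(iii), which you prove cleanly and which genuinely holds for arbitrary row selections.
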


\begin{proof}
Let $a_{r,s}$ be the $(r,s)$ entry of the matrix $GG^H$ then it can readily be shown that
\begin{equation}
\begin{aligned}
a_{r,s}
&=\frac{1}{k}\sum_{m=0}^{\alpha-1}e^{jm(\theta_r-\theta_s)} +\frac{1}{k} \sum_{m=n-\beta}^{n-1}e^{jm(\theta_r-\theta_s)},\\
\end{aligned}
\label{eq:Toep}
\end{equation}
in which $\theta_x = \frac{2\pi}{n}(x-1)$. From this equation, it is clear that $a_{r,s}=a_{r+i,s+i}$;
that is, the elements of each diagonal are equal, which means that  $GG^H$ is a {\it Toeplitz} matrix.
In addition, we can check that $a_{r,n}=a_{r+1,1}$, i.e., the last entry in each row is equal to the first entry of the next row.
This proves that the Toeplitz matrix $GG^H$ is {\it circulant} as well \cite{gray2006toeplitz}.
Also, a quick look at \eqref{eq:Toep} reveals that the elements of the principal diagonal $(r=s)$ are equal to $1$.  Similarly, one
can see that for any  $1 < p < n$, the square matrix $G_{p\times k}G_{p\times k}^H$ is also a Toeplitz matrix; it is
not necessarily circulant, however.
\end{proof}

Considering Remark~\ref{rem1}, one can check that \eqref{eq:Toep} is also valid for complex BCH-DFT codes.
Note that, $\alpha$ and $\beta$ are less constrained for these codes, as mentioned in
Remark~\ref{rem1}.

\begin{rem}
Lemma~\ref{lem1} also holds for complex BCH-DFT codes.
\end{rem}
Further, in a DFT frame, in general, the $n-k$ zero rows of $\Sigma$ are not required
to be successive if they are not designed for error correction.
That is any matrix that can be rearranged as $[I_k \;|\;\bm{0}_{k \times n-k}]^T$ may represent $\Sigma$.
Then, $\Sigma\Sigma^H$ is not necessarily in the form given in \eqref{eq:cov1};
it can be any square matrix of size $n$ with $k$ nonzero elements equal to 1, arbitrary located on the main diagonal.
Then, again Lemma~\ref{lem1}
holds because $a_{rs}=\frac{1}{k}\sum_{i=0}^{k-1}e^{jm_i(\theta_r-\theta_s)}$ and $m_i \in \{1,\hdots,n\}$.
\begin{rem}
Lemma~\ref{lem1} holds for all DFT frames.
\label{rem3}
\end{rem}

\section{Systematic DFT Frames}
\label{sec:sys}

In general, every sample in the codewords of a DFT frame is a linear
combination of all data samples of the input block, i.e., the data samples do not appear explicitly in the codewords.
A specific method of encoding, known as {\it systematic encoding}, leaves the data samples unchanged.
These unchanged samples can be exhibited in any component of the codeword, therefore:

\begin{defn}
An $(n,k)$ frame is said to be systematic if its analysis frame operator
includes $I_{k}$ as a subframe.
\label{def3}
\end{defn}

\subsection{Motivation and Applications}
\label{sec:mot}
In the context of channel coding, there is a special interest in
{\it systematic codes}  \cite{blahut2003algebraic} 
 since the input data is embedded in the encoded output
which simplifies the encoding and decoding algorithms. For example, in systematic
convolutional codes data can be read directly if no errors are made,
or in case only parity bits are affect in an {\it erasure channel}.
Systematic codes are also used in parity-based distributed source coding (DSC)
techniques, e.g., DSC that uses turbo codes for compression
\cite{bajcsy2001coding,garcia2001compression,aaron2002compression}.
DSC addresses the problem of compressing correlated sources by separate encoding and
joint decoding and has found application in sensor networks and video compression \cite{dragotti2009distributed}.
The compression is usually realized through the use of {\it binary channel codes}.

Recently, the authors have introduced a new framework that exploits {\it real-number codes}
for DSC \cite{vaezi2011DSC}
and distributed joint source-channel coding (DJSCC) \cite{vaezi2012WZ}.
Specifically, by using BCH-DFT codes it has been shown that this framework
can result in a better compression compared to the conventional one.
There are  syndrome- and parity-based approaches to do DSC \cite{vaezi2011DSC,dragotti2009distributed,vaezi2012WZ};
the compression is achieved by representing the input data with fewer samples,
which are a linear combination of the input samples.
To do so, in the former approach the encoder
generates syndrome samples with respect to a DFT code, whereas it
 generates parity samples with respect to a {\it systematic} DFT code in the
latter case.
The parity (syndrome) is then quantized and transmitted over a 	
noiseless channel. Assuming the asymmetric DSC \cite{WZ}, where one source is available at the decoder
as side information, the decoder looks for the closest vector to the side information,
among the vectors whose parity (syndrome) is equal to the received one.

The parity-based approach is worthwhile as the parity of a real DFT code
is a real vector contrary to its syndrome which is complex.
More importantly, to accomplish DJSCC only the parity-based approach is
known to be applicable \cite{vaezi2012WZ}. On the other hand, the parity-based approach mandates
systematic DFT codes and is the main motivation of this work.

\subsection{Construction}
\label{sec:con}

In view of Definition~\ref{def3}, the systematic
generator matrix for a real BCH-DFT code can be obtained by
  \begin{align}
  G_{\mathrm{sys}}=GG_{k}^{-1},
\label{eq:Gsys2}
\end{align}
in which $G_{k}$ is a submatrix (subframe \cite{rath2004frame}) of $G$ including $k$ arbitrary rows of $G$.
Note that $G_{k}$ is invertible since it can be represented as
\begin{align}
 G_{k} =\sqrt{\frac{n}{k}} W_{k \times n}^H \Sigma W_{k} = V_{k}^H W_{k},
\label{eq:Gsys3}
\end{align}
in which $V_{k}^H \triangleq \sqrt{\frac{n}{k}}W_{k \times n}^H \Sigma$ and $W_{k}$
are invertible as they are Vandermonde and DFT matrices, respectively.
Obviously, this argument is valid if $W_{k}$ is removed and/or when the $n-k$
zero rows of $\Sigma$ are not successive.
This indicates that any $k$ rows of a  DFT frame make a {\it  basis}
of ${\mathbb C^k}$ and proves that $G_{k}^{-1}$ and thus
systematic DFT frames exist for any DFT frame.

\begin{rem}
 From the above discussion and Remark~\ref{rem3} one can see that what
  we prove in the remainder of this paper is valid for ``any''
DFT frame, not just for real BCH-DFT codes.
\end{rem}

The construction in \eqref{eq:Gsys2} suggests that for each DFT frame there are many (but, a finite number of)
systematic frames  since the rows of $G_{k}$ can be arbitrarily chosen
from those of $G$. This will be discussed in detail later in Section~\ref{sec:number}.
The codewords generated by these systematic frames differ in the
``position'' of systematic samples (i.e., input data).
This implies that  parity (data) samples are
not restricted to form a consecutive block in the associated codewords.
Such a degree of freedom is useful in the sense that one can find the
most suitable systematic frames for specific applications (e.g., the
one with the smallest reconstruction error.)

\subsection{Optimality Condition}
\label{sec:mot}

From rate-distortion theory, it is well known that the rate required to transmit a source, with a given distortion, increases as
the variance of the source becomes larger \cite{Cover}. Particularly, for Gaussian sources this relation
is logarithmic with variance, under the mean-squared error (MSE) distortion measure.
In DSC that uses real-number codes \cite{vaezi2011DSC}, since coding is performed before quantization,
the variance of transmitted sequence depends on the behavior of the
encoding matrix. In syndrome approach, $\bm{s}=H\bm{x}$ \cite{vaezi2011DSC} and
it can be checked that $\sigma_{\bm{s}}=\sigma_{\bm{x}}$, that is,
the variance is preserved.\footnote{In general, any unitary matrix $U$ preserves norms, i.e., for
any complex vector $\bm{x}$, $\|U\bm{x}\|=\|\bm{x}\|.$ Note that $H$ is not unitary because
it is not a square matrix; however, its rows are selected from a unitary matrix and are orthonormal.
This lead to $HH^H=I_{n-k}$, and $\tr(H^HH)=n-k$.} However, as we show shortly,
this is not valid in parity approach and the variance of parity samples
depends on the behavior of encoding matrix $G_{\mathrm{sys}}$.
 In view of rate-distortion theory, it makes a lot of sense
to keep this variance as small as possible. Not surprisingly, we will show that using a tight
frame (tight $G_{\mathrm{sys}}$) for encoding is optimal.

Let $\bm{x}$ be the message vector, a column vector whose elements are i.i.d. random variables with variance $\sigma^2_x$,
 and let $\bm{y}=G_{\mathrm{sys}}\bm{x}$ represent the codeword generated using the systematic frame.
The variance of  $\bm{y}$ is then given by

  \begin{equation}
\begin{aligned}
\sigma^2_y &=  \frac{1}{n} \mathbb{E}\{\bm{y}^H\bm{y}\} =  \frac{1}{n} \mathbb{E}\{ \bm{x}^H G_{\mathrm{sys}}^H  G_{\mathrm{sys}}\bm{x}  \}  \\
&= \frac{1}{n} \sigma^2_x \tr{ (G_{\mathrm{sys}}^H  G_{\mathrm{sys}})},
\label{eq:vary}
\end{aligned}
\end{equation}
and
 \begin{equation}
\begin{aligned}
 \tr\left(G_{\mathrm{sys}}^HG_{\mathrm{sys}}\right) &= \tr\left(G_k^{-1H}G^HGG_k^{-1}\right)  \\
  &= \frac{n}{k}\tr\left((G_k G_k^H)^{-1}\right)  \\
  &= \frac{n}{k} \tr\left((V_{k}^HV_{k})^{-1}\right) \\
  &= \frac{n}{k}\sum_{i=1}^k\frac{1}{\lambda_i},
\label{eq:G7}
\end{aligned}
\end{equation}
in which $\lambda_1 \geq \lambda_2 \geq \cdots \geq \lambda_k>0$ are the eigenvalues of $G_{k}G_{k}^H$ (or $V_{k}^HV_{k}$ equivalently).

This shows that the variance of codewords, generated by a systematic frame, depends on the submatrix $G_{k}$ which is used to create $G_{\mathrm{sys}}$. $G_{k}$, in turn,
is fully known once the position of systematic samples is fixed in the codewords. In other words, the ``position'' of systematic samples determines
the variance of the codewords generated by a systematic DFT frame.
To minimize the effective range of transmitted signal, from \eqref{eq:vary} and \eqref{eq:G7}, we need to do the following optimization problem
\begin{equation}
\begin{aligned}
& \underset{\lambda_i}{\text{minimize}}
& & \sum_{i=1}^k\frac{1}{\lambda_i}\\
& \text{s.t.}
& & \sum_{i=1}^k\lambda_i=k, \,\, \lambda_i>0,
\end{aligned}
\label{eq:Omin}
\end{equation}
where, the constraint $\sum_{i=1}^k\lambda_i=k$ is achieved in consideration of Lemma~\ref{lem1} and \eqref{eq:Schur-Horn2}.

By using the Lagrangian method \cite{boyd2004convex}, we can show that the optimal eigenvalues are $\lambda_i=1$; this implies a tight frame \cite{goyal2001quantized}.
In the sequel, we analyze
the eigenvalues of $G_{p\times k}G_{p\times k}^H$, $p \leq n$, that helps us characterize tight systematic frames, so as to
minimize the variance of transmitted codewords.

\section{Main Results on the Extreme Eigenvalues}
\label{sec:eig}

In this section we
investigate some bounds on the eigenvalues of $G_{p\times k}G_{p\times k}^H$ where $G$ is defined in \eqref{eq:G1}.
These bounds play an important role in the performance evaluation of the systematic DFT frames.
We also determine the exact values of some eigenvalues in certain cases.

\begin{thm}
Let $G_{p\times k}, 1 \leq p \leq n$ be any $p\times k$ submatrix of $G$. Then, the smallest eigenvalue of
$G_{p\times k}G_{p\times k}^H$ is no more than one, and the largest eigenvalue of
$G_{p\times k}G_{p\times k}^H$ is at least one.
\label{thm1}
\end{thm}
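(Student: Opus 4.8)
The plan is to exploit the structure established in Lemma~\ref{lem1}, namely that every principal diagonal entry of $G_{p\times k}G_{p\times k}^H$ equals $1$, together with the Schur--Horn inequalities in Proposition~1. Write $M \triangleq G_{p\times k}G_{p\times k}^H$, a Hermitian $p\times p$ matrix with real eigenvalues $\lambda_1(M)\geq\cdots\geq\lambda_p(M)$. By part iii of Lemma~\ref{lem1}, $m_{ii}=1$ for every $i$, so $\tr(M)=p$ and hence the average of the eigenvalues of $M$ is exactly $1$. This immediately forces $\lambda_p(M)\leq 1\leq\lambda_1(M)$, since the smallest eigenvalue cannot exceed the mean and the largest cannot fall below it. More directly, applying \eqref{eq:Schur-Horn1} with the single index $i_1$ chosen arbitrarily gives $\lambda_p(M)\leq m_{i_1i_1}=1\leq\lambda_1(M)$.

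However, the theorem is stated for the eigenvalues of $G_{p\times k}G_{p\times k}^H$, which for $p>k$ is a $p\times p$ matrix of rank at most $k$, so it has $p-k$ zero eigenvalues; in that regime $\lambda_p(M)=0\leq 1$ trivially, and the content is that the largest nonzero eigenvalue is $\geq 1$, which again follows from $\tr(M)=p$ and the fact that at most $k$ eigenvalues are nonzero (so their average is $p/k\geq 1$). Conversely, for $p<k$ the matrix $M=G_{p\times k}G_{p\times k}^H$ is $p\times p$ of full rank $p$ (its rows come from a Vandermonde-type system, cf.\ \eqref{eq:Gsys3}), and the clean statement via $\tr(M)=p$ still applies. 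The cleanest uniform argument is therefore: whichever of $G_{p\times k}G_{p\times k}^H$ or $G_{p\times k}^HG_{p\times k}$ one looks at, the nonzero eigenvalues coincide (footnote on p.~2), so it suffices to treat the $\min(p,k)\times\min(p,k)$ Gram matrix with all-ones diagonal and invoke \eqref{eq:Schur-Horn1} or the trace identity \eqref{eq:Schur-Horn2}.

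Concretely, I would organize the write-up as follows. First, reduce to the square Gram matrix $Q$ of size $q=\min(p,k)$ formed from the appropriate one of $V^HV$ or $VV^H$, noting via Lemma~\ref{lem1}(iii) (and Remark~\ref{rem3} for general DFT frames) that $Q$ is Hermitian with unit diagonal, hence $\tr(Q)=q$ and $\sum_i \lambda_i(Q)=q$ by \eqref{eq:Schur-Horn2}. Second, conclude $\lambda_{\min}(Q)\leq 1\leq\lambda_{\max}(Q)$: if all eigenvalues were $>1$ the trace would exceed $q$, and if all were $<1$ it would be less than $q$; equality throughout is allowed and corresponds to the tight case. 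Third, translate back: since the nonzero eigenvalues of $G_{p\times k}G_{p\times k}^H$ are exactly those of $Q$, and any extra eigenvalues are zero (hence $\leq 1$), the smallest eigenvalue of $G_{p\times k}G_{p\times k}^H$ is $\leq 1$ and the largest is $\geq 1$.

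I do not expect a serious obstacle here; the only care needed is bookkeeping about the rank-deficient case $p>k$ (where ``smallest eigenvalue $\leq 1$'' is vacuous because it is $0$, and the real assertion is about the largest) versus $p\leq k$ (where the square Gram matrix is full rank). Making the statement and proof consistent across both regimes — by phrasing everything in terms of the trace of the relevant unit-diagonal Gram matrix and the pigeonhole principle on its eigenvalues — is the one place where a sloppy argument could slip, so I would be explicit about which matrix carries the unit diagonal in each case.
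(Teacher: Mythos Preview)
Your proposal is correct and matches the paper's approach exactly: invoke Lemma~\ref{lem1}(iii) to get unit diagonal entries of $G_{p\times k}G_{p\times k}^H$, then apply the Schur--Horn inequality \eqref{eq:Schur-Horn1} to conclude $\lambda_{\min}\leq 1\leq\lambda_{\max}$. The extra case analysis you sketch for $p>k$ versus $p\leq k$ is unnecessary --- \eqref{eq:Schur-Horn1} applies to any Hermitian matrix regardless of rank, so the paper's two-line argument (your first paragraph, second sentence onward) already works uniformly without switching to the smaller Gram matrix.
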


\begin{proof}
From Lemma~\ref{lem1}, we know that all principal diagonal entries of $G_{p\times k}G_{p\times k}^H$
 are unity. As a result, using the Schur-Horn inequality in \eqref{eq:Schur-Horn1},
 we obtain $\lambda_{\min}(G_{p\times k}G_{p\times k}^H) \leq 1 \leq  \lambda_{\max}(G_{p\times k}G_{p\times k}^H)$.
 This proves the claim.
\end{proof}

Note that  $\lambda_{1}(G_{p\times k}G_{p\times k}^H)=\lambda_{1}(G_{p\times k}^HG_{p\times k})$
for any $G_{p\times k}$. Nevertheless,
 this is not correct for $\lambda_{\min}$ in general.
 A tighter bound on $\lambda_{1}$
 can be achieved when $G_{p\times k}$ is a tall\footnote{An $m\times n$ matrix $A$ is called to be tall if $m>n$. Similarly,
 if $m<n$, then $A$ is a short matrix. } matrix.

\begin{thm}
Given a tall (short) $G_{p\times k}$, the largest (smallest) eigenvalue of
$G_{p\times k}^HG_{p\times k}$ is lower (upper) bounded by $p/k$.
\label{thm0}
\end{thm}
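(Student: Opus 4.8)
The plan is to exploit the trace identity already derived in \eqref{eq:GHG}, namely $G^HG = \frac{n}{k} I_k$, which forces $\tr(G^HG) = n$. Since the rows of $G$ are the vectors $\bm{g}_1,\ldots,\bm{g}_n$ (the rows of the analysis operator), we have $\tr(G^HG) = \sum_{i=1}^n \|\bm{g}_i\|^2 = n$, and by Lemma~\ref{lem1}(iii) every row has $\|\bm{g}_i\|^2 = 1$ (this is just the statement that the principal diagonal entries of $G_{p\times k}G_{p\times k}^H$ are unity, applied with the single selected row). Now for a submatrix $G_{p\times k}$ consisting of $p$ rows of $G$, we get $\tr(G_{p\times k}^HG_{p\times k}) = \tr(G_{p\times k}G_{p\times k}^H) = \sum_{i=1}^p \|\bm{g}_{j_i}\|^2 = p$, where the first equality is the standard fact that $A^HA$ and $AA^H$ have the same trace.

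For the tall case ($p > k$), the matrix $G_{p\times k}^HG_{p\times k}$ is $k \times k$ with nonnegative eigenvalues $\mu_1 \geq \cdots \geq \mu_k \geq 0$ summing to $p$. Hence the largest eigenvalue satisfies $\mu_1 \geq \frac{1}{k}\sum_{i=1}^k \mu_i = \frac{p}{k}$, which is the claimed lower bound. For the short case ($p < k$), the relevant square matrix is $G_{p\times k}G_{p\times k}^H$, which is $p \times p$ with eigenvalues $\nu_1 \geq \cdots \geq \nu_p \geq 0$ summing to $p$; these are exactly the nonzero eigenvalues of $G_{p\times k}^HG_{p\times k}$, whose remaining $k-p$ eigenvalues are zero. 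Therefore the smallest eigenvalue of $G_{p\times k}^HG_{p\times k}$ is $0$ when $p<k$ — so I should read the statement as referring to the smallest eigenvalue among the $p$ nonzero ones, or equivalently $\nu_p = \lambda_{\min}(G_{p\times k}G_{p\times k}^H)$. Then $\nu_p \leq \frac{1}{p}\sum_{i=1}^p \nu_i = \frac{p}{k}$... wait, that gives $\le 1$, not $\le p/k$; the correct averaging is over the $k$ eigenvalues of $G_{p\times k}^HG_{p\times k}$: the smallest of those $k$ numbers is $\le \frac{1}{k}\sum = \frac{p}{k}$, and since $p<k$ forces the smallest to actually be $0 \le p/k$, the nontrivial content is that even the smallest \emph{nonzero} eigenvalue need not exceed... so I would instead argue: among the $k$ eigenvalues of $G_{p\times k}^HG_{p\times k}$ summing to $p$, the minimum is at most the average $p/k$.

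Concretely the write-up is: by Lemma~\ref{lem1}(iii) and \eqref{eq:Schur-Horn2}, $\tr(G_{p\times k}G_{p\times k}^H) = p$; since $G_{p\times k}^HG_{p\times k}$ and $G_{p\times k}G_{p\times k}^H$ share the same nonzero spectrum and the former is $k\times k$ (padded with zeros when $p<k$) while the latter is $p\times p$, in the tall case all $k$ eigenvalues of the $k\times k$ matrix $G_{p\times k}^HG_{p\times k}$ are positive and sum to $p$, giving $\lambda_{\max} \geq p/k$; in the short case the $k$ eigenvalues of $G_{p\times k}^HG_{p\times k}$ (including $k-p$ zeros) sum to $p$, so $\lambda_{\min} \leq p/k$. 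I do not anticipate a genuine obstacle here — the only delicate point is bookkeeping about which matrix ($V^HV$ versus $VV^H$) carries the $\min$, since the footnote in the excerpt warns that $\lambda_{\min}$ behaves differently for non-square $V$; I would state explicitly in the short case that $\lambda_{\min}(G_{p\times k}^HG_{p\times k}) = 0$ trivially and that the meaningful bound is on the average of its eigenvalues, matching the theorem's phrasing once one adopts the convention that the $k\times k$ form is intended.
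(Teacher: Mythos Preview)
Your final ``concretely the write-up is'' paragraph is exactly the paper's argument: use Lemma~\ref{lem1}(iii) and \eqref{eq:Schur-Horn2} to get $\tr(G_{p\times k}G_{p\times k}^H)=p$, pass to the $k\times k$ matrix $G_{p\times k}^HG_{p\times k}$ via the shared nonzero spectrum, and bound the extreme eigenvalue by the average $p/k$. The paper likewise observes that in the short case $\lambda_{\min}(G_{p\times k}^HG_{p\times k})=0$ trivially, so just trim the exploratory detour in the middle and you match the published proof.
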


\begin{proof} Let $p> k$. Since all diagonal entries of $G_{p\times k}G_{p\times k}^H$
 are unity, from \eqref{eq:Schur-Horn2} we have $\sum_{\substack i=1}^{p}{\lambda_{i}(G_{p\times k}G_{p\times k}^H)} =p$. On the other hand,
 since the nonzero eigenvalues of $G_{p\times k}G_{p\times k}^H$ and $G_{p\times k}^HG_{p\times k}$ are
 equal, $G_{p\times k}G_{p\times k}^H$ has $k$ nonzero eigenvalues and we get
 \begin{equation}
\begin{aligned}
 p &=\sum_{\substack i=1}^{p}{\lambda_{i}(G_{p\times k}G_{p\times k}^H)}\\
 &=\sum_{\substack i=1}^{k}{\lambda_{i}(G_{p\times k}^HG_{p\times k})}  \\
 &\leq k\lambda_{1}(G_{p\times k}^HG_{p\times k}).
 \label{eq:tall}
 \end{aligned}
 \end{equation}
Thus, for any tall $G_{p\times k}$, $\lambda_{1}(G_{p\times k}^HG_{p\times k})=\lambda_{1}(G_{p\times k}G_{p\times k}^H)\geq \frac{p}{k}>1$.
Following a similar line of proof, for a short submatrix $(p<k)$ we obtain $\lambda_{\min}(G_{p\times k}^HG_{p\times k})\leq \frac{p}{k}<1$.

Obviously the same bounds are valid for the extreme eigenvalues of $G_{p\times k}G_{p\times k}^H$.
What is more, since $p/k$ is the average value of eigenvalues, considering that $\lambda_{\min}(G_{p\times k}G_{p\times k}^H)= 0$
for $p > k$, and $\lambda_{\min}(G_{p\times k}^HG_{p\times k})= 0$
for $p < k$, from \eqref{eq:tall} we conclude that corresponding bounds on the largest eigenvalues are strict.
\end{proof}
It is worth noting that in \eqref{eq:tall} the equality is achieved when $p=n$; it can also be achieved for ``specific''
submatrices only in the case of integer oversampling, i.e., when $n=Mk$, as we discuss later in this paper.

 We use the above results to find better bounds for the extreme eigenvalues of $G_{k}G_{k}^H$ in the following theorem.

\begin{thm}
For any $G_{k}$, a square submatrix of $G$ in \eqref{eq:G1} in which $n\neq Mk$, the smallest (largest) eigenvalue of
$G_{k}G_{k}^H$ is strictly upper (lower) bounded by $1$.
\label{thm2}
\end{thm}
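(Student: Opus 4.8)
The plan is to build on Theorem~\ref{thm1} (which already gives the non-strict bounds $\lambda_{\min}(G_k G_k^H)\le 1\le \lambda_{\max}(G_k G_k^H)$) and upgrade the inequalities to strict ones under the hypothesis $n\neq Mk$. First I would recall that $G_k = V_k^H W_k$ from \eqref{eq:Gsys3}, so $G_k G_k^H = V_k^H W_k W_k^H V_k = V_k^H V_k$; hence it suffices to argue about the square Vandermonde-type matrix $V_k^H=\sqrt{n/k}\,W_{k\times n}^H\Sigma$, whose rows are $k$ selected rows of $\sqrt{n/k}\,W_n^H$ restricted to the $k$ columns picked out by $\Sigma$. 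The key structural fact from Lemma~\ref{lem1} is that $G_k G_k^H$ is Hermitian with \emph{all diagonal entries equal to $1$} and $\operatorname{tr}(G_k G_k^H)=k$. By the Schur--Horn argument used for Theorem~\ref{thm1}, equality $\lambda_{\min}=1$ (equivalently $\lambda_{\max}=1$) forces \emph{all} eigenvalues to equal $1$, i.e. $G_k G_k^H = I_k$. So the theorem reduces to a single clean statement: if $n\neq Mk$, then no $k\times k$ submatrix $G_k$ of $G$ can satisfy $G_k G_k^H = I_k$.

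The heart of the proof is therefore to show that $G_k G_k^H=I_k$ implies integer oversampling $n=Mk$. Here I would use the explicit entry formula \eqref{eq:Toep}: the $(r,s)$ entry of $G_k G_k^H$ is $a_{r,s}=\frac1k\sum_{i=0}^{k-1}e^{jm_i(\theta_r-\theta_s)}$ where $\theta_x=\frac{2\pi}{n}(x-1)$ and $m_0,\dots,m_{k-1}$ are the indices of the nonzero rows of $\Sigma$; the $k$ chosen rows of $G$ correspond to a set of $k$ angles $\{\theta_{r}\}$ among the $n$ roots of unity. Demanding $a_{r,s}=\delta_{r,s}$ says exactly that the $k\times k$ matrix whose $(i,r)$ entry is $e^{jm_i\theta_r}$ has orthogonal columns of squared norm $k$ — i.e. it is $\sqrt k$ times a unitary matrix, a square submatrix of the (scaled) DFT matrix that is itself proportional to a unitary. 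The classical fact I would invoke (or prove in two lines via Lemma~\ref{lem0} / the determinant identity \eqref{eq:detV0}) is that a $k\times k$ submatrix of the $n\times n$ DFT matrix obtained by selecting rows $\{m_i\}$ and columns $\{r_j\}$ is a (scaled) unitary matrix only when the selected row-index set and column-index set are each arithmetic progressions mod $n$ with a common step that divides $n$; this is only possible when $k\mid n$. Concretely, orthogonality of columns $r$ and $s$ reads $\sum_{i} e^{jm_i(\theta_r-\theta_s)}=0$, a sum of $k$-th... — a sum of $k$ distinct $n$-th roots of unity vanishing for every pair — and a short counting/vanishing-sums-of-roots-of-unity argument pins down $n=Mk$.

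The step I expect to be the main obstacle is exactly this last combinatorial claim: translating ``$G_kG_k^H=I_k$'' into ``$k\mid n$'' rigorously, since vanishing sums of roots of unity can be subtle (they need not be unions of full orbits when $n$ is not a prime power). I would handle it by not trying to classify all such submatrices, but only by deriving a contradiction from $n\nmid k$: assuming $G_kG_k^H=I_k$, the matrix $\frac{1}{\sqrt k}V_k^H$ is unitary, so its inverse is its conjugate transpose, and then $G_{\mathrm{sys}}=G G_k^{-1}$ would be a tight frame with $\operatorname{tr}((G_kG_k^H)^{-1})=k$ meeting the equality case of \eqref{eq:G7}; combined with the remark already made in the paper that equality in \eqref{eq:tall} for proper submatrices occurs ``only in the case of integer oversampling $n=Mk$,'' this closes the loop. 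Finally, I would note the strictness is genuinely two-sided: $\lambda_{\min}<1$ and $\lambda_{\max}>1$ cannot fail independently, because $\lambda_{\min}=1$ alone (or $\lambda_{\max}=1$ alone) already forces $G_kG_k^H=I_k$ by Schur--Horn, so ruling out that single degenerate case simultaneously gives both strict inequalities, completing the proof.
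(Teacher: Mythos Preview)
Your reduction is sound and elegant: since $\operatorname{tr}(G_kG_k^H)=k$, the Schur--Horn argument shows that $\lambda_{\min}=1$ (equivalently $\lambda_{\max}=1$) forces every eigenvalue to equal $1$, i.e.\ $G_kG_k^H=I_k$. So the theorem is equivalent to the single assertion that no $k\times k$ submatrix $G_k$ can be unitary when $k\nmid n$. This is a genuinely different framing from the paper's.

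The gap is exactly where you flag it, and your two workarounds do not close it. Your ``classical fact'' that a scaled-unitary $k\times k$ DFT submatrix forces both index sets to be arithmetic progressions is false as stated: in $\mathbb{Z}_{12}$, the row set $\{0,1,6,7\}$ together with the column set $\{0,3,6,9\}$ yields a scaled-unitary $4\times4$ submatrix, yet $\{0,1,6,7\}$ is no arithmetic progression. The weaker conclusion you actually need, $k\mid n$, is a statement about spectral pairs in $\mathbb{Z}_n$ that is not a one-liner, and you do not supply a proof. Your fallback---invoking the sentence after \eqref{eq:tall} that equality there holds ``only in the case of integer oversampling''---is circular: that sentence is an explicit forward reference (``as we discuss later in this paper'') to the very circle of results containing Theorem~\ref{thm2}; moreover, \eqref{eq:tall} concerns tall submatrices $G_{p\times k}$ of $G$, not the tightness of $G_{\mathrm{sys}}$, so the link you draw is not the one the paper makes.

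The paper proceeds quite differently. Writing $n=Mk+l$ with $0<l<k$, it assumes $\lambda_k(G_k^HG_k)=1$ and (via a forward reference to Theorem~\ref{thm7}) argues that the chosen rows then have minimum circular spacing at least $M$; this allows the rows of $G$ to be partitioned into $M$ index-shifted copies $G_k,G_k^1,\dots,G_k^{M-1}$ together with a leftover $l\times k$ block. By Lemma~\ref{lem0}, each $G_k^{mH}G_k^m$ shares the spectrum of $G_k^HG_k$, so $A=\sum_{m=0}^{M-1}G_k^{mH}G_k^m$ satisfies $\lambda_i(A)=M\lambda_i(G_k^HG_k)$. Since $A+B=G^HG=\tfrac{n}{k}I_k$ with $B$ the Gram matrix of the leftover block, Weyl's inequality \eqref{eq:Weyl2} combined with $\lambda_1(B)\ge1$ from Theorem~\ref{thm1} gives $M\lambda_k(G_k^HG_k)\le n/k-1$, hence $\lambda_k\le(n/k-1)/\lfloor n/k\rfloor<1$, contradicting the assumption; the trace identity then yields $\lambda_1>1$.
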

\begin{proof}
See Appendix~\ref{sec:App1}.
\end{proof}

Theorem~\ref{thm2} implies that for $n\neq Mk$ we cannot have ``tight'' systematic frames.
Because, for a frame with frame operator $F^HF$, the tightest possible frame bounds  are, respectively,
$a=\lambda_{\min}(F^HF)$ and $b=\lambda_{\max}(F^HF)$ \cite{chen2011mathematical}.
In other words, for a tight frame
$\lambda_{\min}(F^HF)=\lambda_{\max}(F^HF)$; i.e., the eigenvalues of $F^HF$ are equal \cite{goyal2001quantized}.

\begin{cor}
\label{cor3}
Tight systematic DFT frames can exist only if $n = Mk$, where $M$ is a positive integer.
\end{cor}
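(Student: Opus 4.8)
The plan is to derive Corollary~\ref{cor3} as an immediate logical consequence of Theorem~\ref{thm2} together with the characterization of tightness in terms of the spectrum of the frame operator. Recall that a systematic $(n,k)$ DFT frame has analysis operator $G_{\mathrm{sys}} = G G_k^{-1}$ for some invertible $k\times k$ submatrix $G_k$ of $G$, and that by \eqref{eq:vary}--\eqref{eq:G7} the eigenvalues of its frame operator $G_{\mathrm{sys}}^H G_{\mathrm{sys}}$ are governed by the eigenvalues $\lambda_1 \geq \cdots \geq \lambda_k > 0$ of $G_k G_k^H$. The frame $G_{\mathrm{sys}}$ is tight precisely when $\lambda_{\min}(G_{\mathrm{sys}}^H G_{\mathrm{sys}}) = \lambda_{\max}(G_{\mathrm{sys}}^H G_{\mathrm{sys}})$, i.e. when all the $\lambda_i$ coincide.

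First I would argue the contrapositive: suppose $n \neq Mk$ for every positive integer $M$. Then Theorem~\ref{thm2} applies to \emph{every} square submatrix $G_k$ of $G$, and it gives $\lambda_{\min}(G_k G_k^H) < 1 < \lambda_{\max}(G_k G_k^H)$, so in particular $\lambda_{\min}(G_k G_k^H) \neq \lambda_{\max}(G_k G_k^H)$. Since the nonzero spectrum of $G_k G_k^H$ is the same as that of $V_k^H V_k$ and these are exactly the quantities entering $G_{\mathrm{sys}}^H G_{\mathrm{sys}}$, the frame operator of $G_{\mathrm{sys}}$ cannot have all eigenvalues equal. Hence no choice of the $k$ systematic rows yields a tight frame, and therefore tight systematic DFT frames can only exist when $n = Mk$ for some positive integer $M$.

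The one point that deserves care — and is really the only substantive step not already packaged in Theorem~\ref{thm2} — is the link between ``the eigenvalues of $G_k G_k^H$ are not all equal'' and ``$G_{\mathrm{sys}}$ is not tight.'' I would make this explicit by noting that tightness of a frame with frame operator $F^H F$ is equivalent to $\lambda_{\min}(F^H F) = \lambda_{\max}(F^H F)$ (the tightest frame bounds are the extreme eigenvalues of the frame operator, as already cited via \cite{chen2011mathematical, goyal2001quantized} in the discussion following Theorem~\ref{thm2}), and then observing that $G_{\mathrm{sys}}^H G_{\mathrm{sys}}$ is unitarily/similarity-related to the inverse Gramian structure in \eqref{eq:G7} so that its eigenvalues are $\tfrac{n}{k}\lambda_i^{-1}$ up to the bookkeeping there; these are all equal iff the $\lambda_i$ are all equal. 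I do not expect any real obstacle here: the corollary is a direct packaging of Theorem~\ref{thm2}, and the main content has already been done in that theorem's proof (deferred to Appendix~\ref{sec:App1}). The statement is deliberately one-directional (``only if''), so I need not show that $n = Mk$ actually \emph{suffices} for tightness — that converse is taken up separately in the subsequent sections on when systematic frames can be tight.
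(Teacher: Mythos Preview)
Your proposal is correct and follows essentially the same route as the paper: the paper derives the corollary directly from Theorem~\ref{thm2} together with the fact that tightness is equivalent to $\lambda_{\min}(F^HF)=\lambda_{\max}(F^HF)$, which is precisely your contrapositive argument. Your extra remark making the relation $\lambda_i(G_{\mathrm{sys}}^HG_{\mathrm{sys}}) = (n/k)\lambda_i(G_kG_k^H)^{-1}$ explicit is a welcome bit of bookkeeping that the paper only spells out later (in the proof of Properties~\ref{property1}--\ref{property2}), but the underlying argument is the same.
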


\noindent Note that systematic DFT frames are not necessarily tight for $n = Mk$. In Section~\ref{sec:class},
we prove that tight systematic DFT frames exist for $n = Mk$ and show how to construct such frames.

In the remainder of this section, we shall find exact values, rather than bounds, for
some of the eigenvalues of $G_{k}^H G_{k}$ when $k < n \leq 2k$. This range of $n$ is specifically
important in parity-based DSC \cite{vaezi2011DSC}, where $n-k$
parity samples are used to represent $k$ samples and so for compression, $n-k < k$.

\begin{thm}
For any $G_{k}$, a square submatrix of $G$ in \eqref{eq:G1}, where $k<n<2k$, the $2k-n$ largest eigenvalues of $G_{k}G_{k}^H$ are equal to $n/k$.\\
\label{thm4}
\end{thm}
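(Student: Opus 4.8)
The plan is to strip away every unitary factor and reduce $G_{k}G_{k}^H$ to a scalar multiple of $BB^H$, where $B$ is an honest $k\times k$ submatrix of the \emph{unitary} matrix $W_n^H$; the eigenvalue $n/k$ will then fall out of a rank count on the complementary block, with no DFT-specific computation needed. Concretely, using \eqref{eq:Gsys3} write $G_{k}=V_{k}^HW_{k}=\sqrt{n/k}\,W_{k\times n}^H\Sigma W_{k}$; since $W_{k}$ is unitary, $G_{k}G_{k}^H=\tfrac{n}{k}\,(W_{k\times n}^H\Sigma)(W_{k\times n}^H\Sigma)^H=\tfrac{n}{k}BB^H$, where $B\triangleq W_{k\times n}^H\Sigma$ is exactly the matrix obtained from $W_n^H$ by keeping the $k$ rows that define $G_{k}$ and the $k$ columns corresponding to the nonzero rows of $\Sigma$. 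So it suffices to show the $2k-n$ largest eigenvalues of $BB^H$ equal $1$. (The same reduction, with $W_k$ omitted, covers general DFT frames, so the result is not special to \eqref{eq:G1}.)

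Next I would establish two facts about $B$. First, $B$ is a submatrix of the unitary $W_n^H$, hence $\|B\|_2\le\|W_n^H\|_2=1$, so every eigenvalue of $BB^H$ lies in $[0,1]$. Second, let $C$ be the $(n-k)\times k$ matrix consisting of the \emph{same} $k$ columns of $W_n^H$ but the complementary $n-k$ rows. Those $k$ columns of $W_n^H$ are orthonormal, so stacking them yields $B^HB+C^HC=I_{k}$. Because $k<n<2k$, the matrix $C$ has only $n-k<k$ rows, so $\operatorname{rank}(C^HC)\le n-k$ and $C^HC$ has the eigenvalue $0$ with multiplicity at least $2k-n$; therefore $B^HB=I_{k}-C^HC$ has the eigenvalue $1$ with multiplicity at least $2k-n$. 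Since $B$ is square, $BB^H$ and $B^HB$ share their spectrum, so $BB^H$ also has eigenvalue $1$ with multiplicity at least $2k-n$.

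Combining the two facts, $BB^H$ has all eigenvalues $\le 1$ and at least $2k-n$ of them equal to $1$; hence its $2k-n$ largest eigenvalues are exactly $1$, and multiplying by $n/k$ gives the claim for $G_{k}G_{k}^H$. (Consistency check: by Lemma~\ref{lem1} and \eqref{eq:Schur-Horn2} the $k$ eigenvalues of $G_{k}G_{k}^H$ sum to $k$; with $2k-n$ of them pinned at $n/k$, the remaining $n-k$ sum to $k-(2k-n)\cdot\frac{n}{k}$, which is positive, as it must be.)

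I expect the only real content — and thus the step to state carefully — to be the complementary-block identity $B^HB+C^HC=I_k$ together with the dimension count on $C$: the eigenvalue $n/k$ appears simply because the $n-k<k$ ``missing'' rows of the unitary matrix cannot constrain a subspace of dimension $2k-n$. Equivalently, one may phrase this through the Naimark complement of the harmonic frame, but the submatrix-of-a-unitary argument above seems the most direct. The reduction to $\tfrac{n}{k}BB^H$ and the norm bound $\|B\|_2\le1$ are routine, so I would not belabor them.
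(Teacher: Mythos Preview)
Your proof is correct and follows essentially the same route as the paper: both arguments partition the tight frame into the chosen $k$ rows and the complementary $n-k$ rows, use the resulting identity $B^HB+C^HC=I_k$ (the paper writes it as $G_k^HG_k+\bar G_{p\times k}^H\bar G_{p\times k}=\tfrac{n}{k}I_k$), and then invoke the rank deficiency of the $(n-k)\times k$ block to pin $2k-n$ eigenvalues. The only cosmetic difference is that you identify these as the \emph{largest} via the submatrix norm bound $\|B\|_2\le 1$, whereas the paper uses the Weyl-based eigenvalue pairing of Corollary~\ref{cor1}; your route is slightly more elementary since $B^HB=I_k-C^HC$ already commute, making Weyl unnecessary.
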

\begin{proof}
From Corollary~\ref{cor1} we know that if two Hermitian matrices sum up to a scaled identity matrix, their eigenvalues add up to be fixed.
Thus, if $A$ and $B$ have the same spectrum we obtain
 \begin{align}
\lambda_{j}(A) + \lambda_{k-j+1}(A) = \gamma.
 \label{eq:f2}
 \end{align}
Now, let $ G$ be partitioned as
$ G = \left[\begin{array}{c}
      G_{ k} \\ 
      \bar{G}_{p \times k}
 \end{array}\right]$ where $p=n-k$.
Let $A=G_{ k}^H G_{ k}$ and $ B=\bar{G}_{p \times k}^H \bar{G}_{p \times k}$,
then $A+B=G^H G = \frac{n}{k}I_k$.
Clearly, Corollary~\ref{cor1} holds with $\gamma = \frac{n}{k}$.
 Also, note that when $p < k$ then $\bar{G}_{p\times k}^H\bar{G}_{p\times k}$
  has only $p$ nonzero eigenvalues. Therefore, in such a case, $k-p$ largest eigenvalues
  of $G_{k}^HG_{k}$ are equal to $n/k$. 
  \end{proof}

 Another interesting case arises when $n=2k$.
  Numerical results shows that under this condition, $A$ and $B$
  have the same set of eigenvalues. We prove this when
  $G_{k}$ either includes successive or every other rows of $G$.
 In such cases, one can verify that $(\bar{G}_{k})_{i,j} =  e^{j\theta} (G_{k})_{i,j}$;
 thus, Lemma~\ref{lem0} holds and $A$ and $ B$
  have the same eigenvalues.
 Hence, from  \eqref{eq:f2} we get
 \begin{align}
\lambda_{j}(G_{ k}^H G_{ k}) + \lambda_{k-j+1}(G_{ k}^H G_{ k}) = \frac{n}{k} =2.
 \label{eq:f3}
 \end{align}
This further implies that for odd values of $k$ the middle eigenvalue of $G_{ k}^H G_{ k}$ is 1.

We close this section with an example illustrating some of the above properties.
 Consider an $(n, k)$ DFT frame and the the following two cases. First, the rows of $G_{ k}$ are evenly spaced rows of $G$ (i.e., either odd rows or even rows).
  This is the ``best'' submatrix in the sense that it minimizes the MSE. For such a submatrix, all eigenvalues are known to be equal,
  as it is a DFT matrix. For example, for $n=10, k=5$, the best square submatrix results
  in $\lambda=1$ with multiplicity of 5. The other extreme case, which maximizes the MSE,
  happens when the rows of $G_{ k}$ are circularly consecutive rows of $G$.
  Again, for the above example, $\lambda= \{ 0.0011, 0.1056, 1,  1.8944, 1.9989\}$.
 With these examples in mind, we will explore the best and worst frames in Section~\ref{sec:class}.
 We shall now discuss signal reconstruction   for systematic frames.

\section{Performance Analysis}
\label{sec:perform}

In this section, we analyze the performance
of quantized systematic DFT codes using the quantization model proposed in \cite{goyal2001quantized},
which assumes that noise components are uncorrelated and each noise component $q_i$
has mean $0$ and variance $\sigma_q^2$, i.e., for any $i, j$,
 \begin{align}
 \mathbb{E}\{q_i\} =0, \qquad \mathbb{E}\{q_iq_j \} =\sigma_q^2 \delta_{ij}.
 \label{eq:Q}
\end{align}
For one thing, $q$ can be uniformly distributed on $[-\Delta/2, \,   \Delta/2 ],$ where $\sigma_q^2 =\Delta^2/12$.
We assume the quantizer range covers the
dynamic range of all codewords encoded using the systematic DFT code in \eqref{eq:Gsys2}.

Let $\bm{x}$ be the signal (message) to be transmitted.
The corresponding codeword is generated by
\begin{align}
\bm{y}=G_{\mathrm{sys}}\bm{x}.
\label{eq:y}
\end{align}
This is then quantized to $\hat{\bm{y}}$ and transmitted. Assuming the quantization model in
\eqref{eq:Q}, transmitted codeword can be modeled by
\begin{align}
\hat{\bm{y}}  = G_{\mathrm{sys}}\bm{x}+ \bm{q},
\label{eq:yhat}
\end{align}
 where $\bm{q}$ represents quantization error. This also models the received codvector
 provided that there is no error or erasure in channel.
 Now, suppose we want to estimate \mbox{\boldmath$x$}
from \eqref{eq:yhat}. This can be done through the use
of linear or nonlinear operations.

\subsection{Linear Reconstruction}
\label{subsec:linear}

We first consider {\it linear reconstruction} of \mbox{\boldmath$x$} form
$\hat{\bm{y}}$ using the pseudoinverse \cite{goyal2001quantized} of $G_{\mathrm{sys}}$,
which is defined by
\begin{align}
G_{\mathrm{sys}}^\dagger = (G_{\mathrm{sys}}^HG_{\mathrm{sys}})^{-1}G_{\mathrm{sys}}^H = \frac{k}{n}G_{k}G^H.
\label{eq:Gdagger}
\end{align}
The linear reconstruction is hence given by
\begin{equation}
\begin{aligned}
\hat{\bm{x}} = \frac{k}{n}G_{k}G^H \hat{\bm{y}}=\bm{x}+\frac{k}{n}G_{k}G^H\bm{q},
\label{eq:G3}
\end{aligned}
\end{equation}
where $\bm{q}$ represents quantization error.

Let us now evaluate the reconstruction error.
The mean-squared reconstruction error, due to the quantization noise,
using a systematic frame can be written as
\begin{equation}
\begin{aligned}
\mathop{\mathrm{MSE_{q}}} &= \frac{1}{k} \mathbb{E}\{\|\hat{\bm{ x}} -\bm{ x}\|^2\}
= \frac{1}{k}  \mathbb{E}\{\|G_{\mathrm{sys}}^{\dagger}\bm{q}\|^2\}  \\
&= \frac{1}{k} \mathbb{E}\{\bm{q}^H G_{\mathrm{sys}}^{\dagger H}G_{\mathrm{sys}}^{\dagger}\bm{q}\}
= \frac{1}{k}  \sigma_q^2 \tr\left(G_{\mathrm{sys}}^{\dagger H}G_{\mathrm{sys}}^{\dagger}\right)\\
&= \frac{k}{n^2} \sigma_q^2 \tr\left(GG_{k}^H G_{k}G^H\right)\\
&= \frac{k}{n^2} \sigma_q^2 \tr\left(G_{k}^H G_{k}G^HG\right)\\
&= \frac{1}{n}  \sigma_q^2 \tr\left(G_{k}^H G_{k}\right)= \frac{k}{n} \sigma_q^2 ,
\label{eq:G6}
\end{aligned}
\end{equation}
where the last step follows because of Lemma~\ref{lem1}. This shows that
DFT codes reduce quantization error.

The fact that the MSE is inversely proportional
to the {\it redundancy} of the frame is a well-known result for
{\it tight} frames 
\cite{kovacevic2007life,Vaezi2011LS,goyal2001quantized, rath2004frame}.
The above analysis, however, indicates that the
MSE is the same for all systematic DFT frames of the same size, no matter they are tight or not.
This is yet assuming that the effective range of the codewords generated by different $G_{\mathrm{sys}}$ is equal,
which implies the same $\sigma_q^2$ for a given number of quantization levels.
However, from  \eqref{eq:vary} it is known that, for a fixed number of quantization levels, $\sigma_q^2$ depends on the
variance of transmitted codewords ($\sigma_y^2$) if the quantizer is designed to
cover the entire effective range of codewords.
Obviously, though, $\sigma_y^2$ can vary from one systematic frame to another, as shown in \eqref{eq:G7}.

\begin{thm}
When encoding with a systematic DFT frame in \eqref{eq:Gsys2} and decoding with linear reconstruction,
for the noise model \eqref{eq:Q} and given a same number of quantization levels,
 the MSE is minimum if and only if the systematic frame is tight.
\label{thm5}
\end{thm}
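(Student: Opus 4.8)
The plan is to reduce the statement to the optimization problem \eqref{eq:Omin} already set up in Section~\ref{sec:mot}, and then to invoke strict convexity to pin down its minimizer and tie that minimizer to tightness. First I would make explicit how the quantization noise variance depends on the frame. For a fixed number of quantization levels, the quantizer that just covers the dynamic range of the codewords has step size $\Delta$ proportional to that dynamic range, which scales linearly with the standard deviation $\sigma_y$ of the codeword entries; hence $\sigma_q^2 = \Delta^2/12$ is a fixed positive multiple of $\sigma_y^2$, the multiple depending only on the number of levels and the input distribution, not on the choice of systematic frame. Combining this with \eqref{eq:G6}, which gives $\mathrm{MSE_q} = \frac{k}{n}\sigma_q^2$, shows that $\mathrm{MSE_q}$ is a fixed positive multiple of $\sigma_y^2$, so attaining the minimum MSE over all systematic DFT frames of size $(n,k)$ is the same as attaining the minimum of $\sigma_y^2$.

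Next I would substitute \eqref{eq:vary} and \eqref{eq:G7} to obtain $\sigma_y^2 = \frac{\sigma_x^2}{k}\sum_{i=1}^{k}\lambda_i^{-1}$, where $\lambda_1 \geq \cdots \geq \lambda_k > 0$ are the eigenvalues of $G_k G_k^H$ (positive because $G_k$ is invertible, as shown around \eqref{eq:Gsys3}). By Lemma~\ref{lem1}(iii) every diagonal entry of $G_k G_k^H$ equals $1$, so \eqref{eq:Schur-Horn2} forces $\sum_{i=1}^{k}\lambda_i = k$. Thus the task is exactly \eqref{eq:Omin}: minimize $\sum_{i=1}^{k}\lambda_i^{-1}$ subject to $\sum_{i=1}^{k}\lambda_i = k$ and $\lambda_i > 0$. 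Since $t \mapsto 1/t$ is strictly convex on $(0,\infty)$, Jensen's inequality (equivalently the AM--HM inequality, or the Lagrangian argument already noted) gives $\sum_{i=1}^{k}\lambda_i^{-1} \geq k$, with equality if and only if $\lambda_1 = \cdots = \lambda_k = 1$.

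It then remains to identify this equality case with tightness of $G_{\mathrm{sys}}$. For that I would use $G_{\mathrm{sys}}^H G_{\mathrm{sys}} = G_k^{-H} G^H G G_k^{-1} = \frac{n}{k}(G_k G_k^H)^{-1}$, whose eigenvalues are $\frac{n}{k}\lambda_i^{-1}$. The frame $G_{\mathrm{sys}}$ is tight precisely when these eigenvalues are all equal, i.e.\ when the $\lambda_i$ are all equal, which together with $\sum_i \lambda_i = k$ is equivalent to $\lambda_i = 1$ for every $i$ (and then $G_{\mathrm{sys}}^H G_{\mathrm{sys}} = \frac{n}{k} I_k$). Chaining the equivalences --- $\mathrm{MSE_q}$ attains its lower bound $\iff$ $\sum_i \lambda_i^{-1} = k$ $\iff$ all $\lambda_i = 1$ $\iff$ $G_{\mathrm{sys}}$ tight --- completes the proof.

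The only step that needs real care is the first one: the MSE formula \eqref{eq:G6} is itself independent of the frame, so the whole content of the theorem lives in the fact that a quantizer with a fixed number of levels must be rescaled to accommodate codewords of different dynamic range, making $\sigma_q^2$ grow with $\sigma_y^2$. I would state this scaling explicitly (it is already implicit in the assumption, made just before \eqref{eq:y}, that the quantizer range covers the dynamic range of all codewords); everything after that is the routine convexity computation whose answer was anticipated in Section~\ref{sec:mot}.
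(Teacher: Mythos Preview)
Your argument is correct and follows the same route as the paper: reduce the MSE comparison to the codeword-variance comparison via \eqref{eq:G6} and the scaling of $\sigma_q^2$ with $\sigma_y^2$, then invoke \eqref{eq:vary}--\eqref{eq:G7} and the optimization \eqref{eq:Omin}. Your treatment is actually more self-contained than the paper's, which defers the ``only if'' direction to Section~\ref{sec:class1}; you close it on the spot with strict convexity of $t\mapsto 1/t$ (equivalently AM--HM), and you make the tightness identification explicit through $G_{\mathrm{sys}}^H G_{\mathrm{sys}} = \tfrac{n}{k}(G_k G_k^H)^{-1}$.
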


\begin{proof}
All systematic DFT frames amount to a same quantization error provided that
the effective range of codewords are fully covered, as shown in \eqref{eq:G6}.
Nevertheless, for a fixed number of quantization levels more codewords
are within the range of quantizer if the systematic frame is tight.
This is clear from \eqref{eq:G7} and \eqref{eq:Omin},
recalling that \eqref{eq:Omin} is minimized by the tight frames.
Moreover, any frame that minimizes \eqref{eq:Omin} is required to be tight.
This will be proved in Section~\ref{sec:class1}.
\end{proof}

The problem we are considering in Theorem~\ref{thm5} is somewhat the {\it dual}
of Theorem 3.1 in \cite{kovacevic2007life}. Note that in \cite[Theorem 3.1]{kovacevic2007life}
``uniform" frames are used for encoding which implies the same variance for all samples of
 codewords whereas the reconstruction error is proportional to $\sum_{i=1}^k\lambda_i$.
 On the other hand, the frames used in Theorem~\ref{thm5} are not uniform in general;
 this result in a codeword variance proportional to $\sum_{i=1}^k\lambda_i$ while having a
 fixed, minimum reconstruction error.

\subsection{Consistent Reconstruction}
\label{subsec:consist}

Linear reconstruction is not always the best one can estimate $\bm{x}$ from $\hat{\bm{y}}$.
Although {\it linear reconstruction} is more tractable, {\it consistent reconstruction} is known to give
significant improvement over  linear reconstruction in overcomplete expansions
\cite{thao1994consistent,thao1994reduction,goyal1998quantized}.
Asymptotically, the MSE is $O(r^{-2})$ for consistent reconstruction,
where $r=n/k$ is the frame redundancy \cite{thao1994reduction}.
As it can be seen from \eqref{eq:G6}, for linear reconstruction this is $O(r^{-1})$.
The improvement, in consistent reconstruction, is due to using deterministic properties of
quantization rather than considering quantization
as an independent noise as in \eqref{eq:Q}.

Although the MSE in consistent reconstruction is approximated by $cr^{-2}$,
where the constant $c$ depends on the source and quantization, this is verified only if
the oversampling ratio $r$ is very high \cite{goyal1998quantized}.
In some practical applications of frames, e.g., channel coding, this ratio cannot be
high, though. Particularly, in the context of interest, i.e., DSC, $r$
is limited to two \cite{vaezi2011DSC}. Besides, consistent reconstruction methods
 do not provide a guidance on how to design the frame, as they do not
 point out how to compute the constant $c$.
 More importantly, \eqref{eq:G6} proves to be predictive of the performance of consistent reconstruction \cite{goyal2001quantized};
 therefore, it can be convincingly used as a design criterion regardless of the reconstruction method.

\subsection{Reconstruction with Error and Erasure}
\label{subsec:consist}
In the context of channel coding, DFT codes are primarily used to provide robustness
against channel impairments which can be errors or erasures. Likewise, in DSC
these codes play the role of channel codes to combat the errors due to the virtual correlation channel \cite{vaezi2011DSC}.
Thus, it makes sense to evaluate the performance of these codes in the presence of error.
To this end, let $\bm{\hat y}  = G\bm{x}+ \bm{\eta}$ where $\bm{\eta}  = \bm{q}+ \bm{e}$.
Assuming that the quantization and channel errors are independent, we will have
\begin{equation}
\begin{aligned}
  \mathbb{E} \{ \bm{\eta}^T\bm{\eta} \}
  &=  \mathbb{E} \{ \bm{q}^T\bm{q} + \bm{q}^T\bm{e} + \bm{e}^T\bm{q} + \bm{e}^T\bm{e} \}  \\
&=  n \sigma_q^2 +\nu \sigma_e^2,
\label{eq:G8}
\end{aligned}
\end{equation}
where $\nu$ is the average number of errors in each codeword and $ \mathbb{E} \{\bm{e}^T\bm{e} \} \triangleq \nu \sigma_e^2 $.
Note that $ \mathbb{E} \{ \bm{e}^T\bm{q} \} = \mathbb{E} \{ \bm{q}^T\bm{e} \} = 0$, because $q$ and $e$ are
independent and $q$  has mean equal to zero.
Finally, following a similar analysis as in \eqref{eq:G6}, we obtain
 \begin{align}
\mathop{\mathrm{MSE_{q+e}}} = \frac{k}{n} \sigma_\eta^2 = \frac{k}{n}  \left( \sigma_q^2 +\frac{\nu}{n} \sigma_e^2 \right).
\label{eq:G9}
\end{align}

\noindent From \eqref{eq:G9} it is clear that reconstruction error has two
distinct parts caused by the quantization and channel errors. It also proves
that DFT codes decrease both channel and quantization errors
by a factor of frame redundancy $r=n/k$.
The above results is for the case when no error correction is done.
It is worth noting
that, even without correcting errors, the MSE can be smaller than quantization error.

As another extreme case, let us consider the case when
error localization is perfect, i.e., errors are in the {\it erasure}
form. Then, we remove the corrupted samples and
do reconstruction using the error-free samples. This approach does not
require error correction in order to reconstruct the message;
however, it is  shown to be equal to the coding theoretic approach \cite{rath2004frame}.
Let $\hat{\bm{y}}_R$ and $\bm{\eta}_R$ denote
remaining rows of $\hat{\bm{y}}$ and $\bm{\eta}$, respectively. Obviously, $\bm{\eta}_R$
includes only quantization error, hence we
represent $\bm{\eta}_R$ with $\bm{q}_R$.
 Also, let $F$ denote the rows of $G_{\mathrm{sys}}$
corresponding to $\bm{q}_R$.
Then, we can write
\begin{align}
\hat{\bm{y}}_R & = F\bm{x}+ \bm{q}_R,\\
\hat{\bm{x}}  &= F^\dagger \hat{\bm{y}}_R,
\label{eq:yhatR}
\end{align}
where $F^{\dagger}=(F^HF)^{-1}F^H$.
Thus, similar to \eqref{eq:G6} we will have
\begin{equation}
\begin{aligned}
\mathop{\mathrm{MSE_{q+\rho}}} &= \frac{1}{k} \mathbb{E}\{\|\bm{\hat x} -\bm{ x}\|^2\}
= \frac{1}{k}  \mathbb{E}\{\|F^{\dagger}\bm{q}_R\|^2\}  \\
&= \frac{1}{k}  \sigma_q^2 \tr\left(F^{\dagger H}F^{\dagger}\right)\\
&= \frac{1}{k}  \sigma_q^2 \tr\left(F^{H}F\right)^{-1}\\
&= \frac{1}{k} \sigma_q^2  \sum_{i=1}^{k}\frac{1}{\mu_i},
\label{eq:G10}
\end{aligned}
\end{equation}
where subscript $\rho$ denotes erasure and $\mu_{1} \geq \mu_{2}\geq \cdots \geq \mu_{k}>0$
represent the eigenvalues of $F^{H}F$. We assume at least $k$ samples are intact
which implies $\mu_{k}>0$.

One nice property of systematic frames is that reconstruction error cannot be
more than quantization error as long as systematic samples are intact.
This holds even if consecutive samples are erased. We know that consecutive erasures can
increase the MSE very fast (e.g., see \cite[Table I]{rath2004frame}).
This can be understood from \eqref{eq:G10} since $F$ contains $I_k$ as a subframe
and in the worst case we can use this subframe for reconstruction which leads to
$\mathop{\mathrm{MSE_{q+\rho}}} = \sigma_q^2 $. Adding any other row (sample) will decrease
the MSE. To show this, let $F^H=[I_k \, |\, E^H]$. Then,  $F^{H}F = I_k + E^{H}E$ and,
from  \eqref{eq:Weyl2}, for $i=j$, we get  $\mu_{i} \geq 1 + \xi_{k}$ for  $ i=1,\hdots,k$,
where $\xi_k$  is the smallest eigenvalue of $E^{H}E$.
Clearly, $\xi_k \geq 0$ since $E^{H}E$ is a positive semidefinite matrix. Further, at least
$\mu_{1}>0$ since otherwise $E$ must be zero. Hence,  $\sum_{i=1}^{k}\frac{1}{\mu_i}$
decreases by adding new rows.

Finally, with consistent reconstruction, we can further decrease the MSE. To do so,
we check if reconstructed values $\hat{x}_i$ for systematic samples in \eqref{eq:yhatR}  are consistent
with their values before reconstruction or not, i.e., for any systematic sample,
we must have $Q(\hat{x}_i)=Q(\hat{y}_{Ri})$. Otherwise, we  replace $\hat{x}_i$ with
\begin{align}
\hat{\hat{x}}_i  = Q(\hat{y}_{Ri}) - {\operatorname{sign}} (Q(\hat{y}_{Ri}) - \hat{x}_i)\frac{\Delta}{2}.
\label{eq:xhathat}
\end{align}

\section{Characterization of Systematic Frames }
\label{sec:class}

\subsection{The Best and Worst Systematic Frames}
\label{sec:class1}

As we discussed in Section~\ref{sec:eig}, the optimal $G_{\mathrm{sys}}$ is achieved
from the optimization problem \eqref{eq:Omin}.
Similarly, to find the worst $G_{\mathrm{sys}}$, we can {\it maximize} \eqref{eq:Omin} instead of minimizing it.
The optimal eigenvalues are known to be $\lambda_i=1, 1 \leq i\leq k$.
But, how can we find the corresponding $G_{\mathrm{sys}}$, or $G_{k}$ equivalently?
More importantly, if a $G_{k}$ with  $\lambda_i=1$ does not exist,
is there any suggestion for the best matrix?

We approach this problem by studying another optimization problem. To this end, we first prove
the following theorem for the eigenvalues of $G_{k}G_{k}^H$.
\begin{thm}
\label{thm6}
Let $\{\lambda_i\}_{i=1}^k$  be the eigenvalues of $G_{k}G_{k}^H$, where $G_{k}$ includes $k$ arbitrary rows of $G$, then we have
\begin{align}
\underset{\lambda_i}{\operatorname{argmin}}  \sum_{i=1}^k\frac{1}{\lambda_i} = \underset{\lambda_i}{\operatorname{argmax}} \prod_{i=1}^k \lambda_i.
 \label{eq:Oeq}
 \end{align}
\end{thm}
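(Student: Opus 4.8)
The plan is to show that both optimization problems, over the feasible set determined by the constraints $\sum_{i=1}^k \lambda_i = k$ and $\lambda_i > 0$ (this constraint coming from Lemma~\ref{lem1} and \eqref{eq:Schur-Horn2}, exactly as in \eqref{eq:Omin}), are optimized at the same point, namely $\lambda_i = 1$ for all $i$. First I would invoke the Lagrangian argument already used after \eqref{eq:Omin}: minimizing $\sum_{i=1}^k 1/\lambda_i$ subject to $\sum_i \lambda_i = k$, $\lambda_i>0$, is a convex problem (the objective is convex on the positive orthant, the constraint is linear), so the unique stationary point $\lambda_i = 1$ is the global minimizer, giving $\underset{\lambda_i}{\operatorname{argmin}} \sum_{i=1}^k 1/\lambda_i = (1,\hdots,1)$. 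Then I would handle the right-hand side: maximizing $\prod_{i=1}^k \lambda_i$ subject to the same constraints. By the arithmetic-geometric mean inequality, $\prod_{i=1}^k \lambda_i \leq \left(\frac{1}{k}\sum_{i=1}^k \lambda_i\right)^k = 1$, with equality if and only if all $\lambda_i$ are equal, i.e. $\lambda_i = 1$. Hence $\underset{\lambda_i}{\operatorname{argmax}} \prod_{i=1}^k \lambda_i = (1,\hdots,1)$ as well, and the two argmins/argmaxes coincide, which is \eqref{eq:Oeq}.

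One subtlety I would be careful about is the status of the feasible set and whether the extrema are actually attained. The eigenvalue vectors $\{\lambda_i\}$ that can arise from an actual submatrix $G_{k}$ of $G$ form a \emph{finite} set (there are finitely many choices of $k$ rows out of $n$), so strictly speaking the argmin/argmax in \eqref{eq:Oeq} ranges over this finite set, not over the whole continuum $\{\lambda_i > 0 : \sum \lambda_i = k\}$. The cleanest way to phrase the argument is therefore: both functions $\bm{\lambda}\mapsto\sum 1/\lambda_i$ and $\bm{\lambda}\mapsto\prod\lambda_i$ are Schur-convex, respectively Schur-concave, on the simplex $\{\sum\lambda_i=k,\ \lambda_i>0\}$ — equivalently, minimizing the former and maximizing the latter both amount to making the vector $\bm{\lambda}$ as "spread out as little as possible," i.e. as close to the all-ones vector in the majorization order. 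Since the relevant monotonicity in the majorization order is strict, among \emph{any} collection of admissible eigenvalue vectors the one that is smallest in majorization order simultaneously minimizes $\sum 1/\lambda_i$ and maximizes $\prod \lambda_i$; therefore the two argmin/argmax sets over the admissible collection are identical, whether or not the all-ones vector is itself admissible.

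The main obstacle, and the part worth stating carefully, is exactly this majorization-order reasoning: I need that $\sum 1/\lambda_i$ is \emph{strictly} Schur-convex and $\prod\lambda_i$ is \emph{strictly} Schur-concave on the open simplex, so that the common minimizer/maximizer over a finite admissible set is forced to be a single vector (or, if not unique, the same set for both criteria). For $\prod\lambda_i$ this is immediate from AM--GM with its equality condition; for $\sum 1/\lambda_i$ one can either cite strict Schur-convexity of $\sum g(\lambda_i)$ for strictly convex $g$, or argue directly that replacing a pair $(\lambda_i,\lambda_j)$ by $(\lambda_i-\epsilon,\lambda_j+\epsilon)$ with $\lambda_i>\lambda_j$ and small $\epsilon>0$ strictly decreases $\sum 1/\lambda_i$ while strictly increasing $\prod\lambda_i$ — the same elementary perturbation moves both objectives in the favorable direction, which is really the heart of why \eqref{eq:Oeq} holds. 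Once that single lemma-level fact is in place, the theorem follows with no further computation.
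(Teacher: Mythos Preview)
Your first paragraph is exactly the paper's proof: the paper simply observes that $\sum_i\lambda_i=k$ (from Lemma~\ref{lem1} and \eqref{eq:Schur-Horn2}) and then applies the Lagrangian method to both sides of \eqref{eq:Oeq}, concluding that each continuous problem on $\{\lambda_i>0:\sum_i\lambda_i=k\}$ is optimized at $\lambda_i=1$. Your use of convexity plus AM--GM is an equivalent (arguably cleaner) way to say the same thing, so at that level the approaches coincide.

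Where you go beyond the paper is in flagging the finite--admissible--set issue and proposing a majorization fix. This is a legitimate concern that the paper does not address, but your proposed resolution has a gap. Strict Schur-convexity of $\sum 1/\lambda_i$ and strict Schur-concavity of $\prod\lambda_i$ only tell you that both objectives are monotone along the majorization \emph{partial} order; they do \emph{not} guarantee that the minimizer of one coincides with the maximizer of the other over an arbitrary finite subset of the simplex, because such a subset need not have a least element in that order. A concrete obstruction (for $k=4$, $\sum\lambda_i=4$): take $A=(3,\,0.5,\,0.25,\,0.25)$ and $B=(1.5,\,1.5,\,0.9,\,0.1)$. Then $\sum 1/\lambda_i(A)=10.33<12.44=\sum 1/\lambda_i(B)$ while $\prod\lambda_i(A)=0.094<0.203=\prod\lambda_i(B)$, so over $\{A,B\}$ the two criteria pick different vectors; and indeed $A,B$ are incomparable under majorization. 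Your perturbation remark (moving mass from a larger $\lambda_i$ to a smaller $\lambda_j$ improves both objectives) is correct but only shows joint monotonicity along Robin~Hood transfers, which is again just Schur monotonicity and does not force the extremizers to agree on a general finite set.

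In short: the part of your argument that matches the paper is fine and is all the paper claims; the extra majorization layer you add is well-motivated but does not close the gap you identified. If you want to make the finite-set version rigorous you would need to use the specific structure of the DFT eigenvalue vectors (as the paper implicitly does later via the determinant analysis in \eqref{eq:detVk}--\eqref{eq:detVm} and Theorem~\ref{thm7}), not just symmetric-function inequalities on the simplex.
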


\begin{proof}
See Section~\ref{sec:App2}.
\end{proof}
\noindent Now, in view of Theorem~\eqref{thm6},  the optimal arguments
of the optimization problem in \eqref{eq:Omin} are equal to those of
\begin{equation}
\begin{aligned}
& \underset{\lambda_i}{\text{maximize}}
& & \prod_{i=1}^k \lambda_i\\
& \text{s.t.}
& & \sum_{i=1}^k\lambda_i=k, \,\, \lambda_i>0,
\end{aligned}
\label{eq:Omax}
\end{equation}
in which $\{\lambda_i\}_{i=1}^k$ are the eigenvalues of $G_{k}G_{k}^H$ (or $V_{k}^HV_{k}$).
By using the Lagrangian method, one can check that \eqref{eq:Omax} has the maximum of 1 and infimum of 0. Then,
considering that
\begin{equation}
\begin{aligned}
\prod_{i=1}^k \lambda_i = \det (V_{k}^HV_{k})=  \det (G_{k}G_{k}^H),
\end{aligned}
\label{eq:det}
\end{equation}
we conclude that the ``best'' submatrix is the one with the largest
determinant (possibly 1) and the ``worst'' submatrix is the one with smallest determinant.

Next, we evaluate the determinant of $V_{k}^HV_{k}$ so as to
find the matrices corresponding to the extreme cases.
To this end, we first evaluate the determinate of $WW^H$ where $W$ is
the Vandermonde matrix with unit complex entries as defined in
\eqref{eq:Vand2}. From \eqref{eq:detV0} we can write
 \begin{equation}
\begin{aligned}
 \det(WW^H) &= \frac{1}{n^n}\prod_{\substack 1\leq p<q\leq n}{|e^{i\theta_p}-e^{i\theta_q}|^2} \\
  &= \frac{1}{n^n}\prod_{\substack 1\leq p<q\leq n}{4 \sin^2 \frac{\pi}{n}(q - p)} \\
  &= \frac{2^{n(n-1)}}{n^n}\prod_{\substack r=1}^{n-1}{\left( \sin^2 \frac{\pi}{n}r\right)^{n-r}},
 \label{eq:detV}
\end{aligned}
\end{equation}
in which $\theta_x = \frac{2\pi}{n}(x-1),  r=q - p$, and $n(n-1)/2$ is the total number of terms that
satisfy $1\leq p<q\leq n$.
But, we see that $W$ is a DFT matrix, and thus,
its determinant must be 1. Therefore, we have
 \begin{align}
 \prod_{\substack r=1}^{n-1}{\left( \sin^2 \frac{\pi}{n}r\right)^{n-r}}=\frac{n^n}{2^{n(n-1)}}.
 \label{eq:formula}
 \end{align}

 The above analysis helps us evaluate the determinant of $V_{k}$  or  $G_{k}$, defined in \eqref{eq:Gsys3}.
Let $\mathcal I_{r_k}=\{i_{r_1}, i_{r_2}, \hdots, i_{r_k}\}$ be those rows of $G$ used to
build $G_{k}$.  Also, without loss of generality, assume $i_{r_1}< i_{r_2}< \cdots < i_{r_k}$. Clearly, $i_{r_1}\geq 1, i_{r_k}\leq n$, and
we obtain
\begin{equation}
\begin{aligned}
 \det(V_kV_k^H) &= \frac{1}{k^k}\prod_{\substack {1\leq p<q\leq n \\ p, q \in \mathcal I_{r_k}}}{|e^{i\theta_p}-e^{i\theta_q}|^2} \\
  &= \frac{1}{k^k}\prod_{\substack {1\leq p<q\leq n \\ p, q \in \mathcal I_{r_k}}}{4 \sin^2 \frac{\pi}{n}(q - p)}.
 \label{eq:detVk}
 \end{aligned}
\end{equation}
Then, since $\sin \frac{\pi}{n}u = \sin \frac{\pi}{n}(n-u) $, one can see that this determinant depends on the
circular distance between rows in $\mathcal I_{r_k}$. For a matrix with $n$ rows, we define the {\it circular distance} between rows $p$ and $q$
as $\min{\{|q - p|, n-|q - p|\}}$. In this sense, for example, the distance
between rows $1$ and  $n$ is one.
Now, it is reasonable to believe that \eqref{eq:detVk} is minimized when the selected rows are {\it circularly successive}.\footnote{
A set of J rows $\{i_{r_1}< i_{r_2}< \hdots< i_{r_J}\}$ of a matrix
are successive if they are one after the other, i.e.,  $i_{r_j}=i_{r_{j-1}+1}$.
A set of rows are circularly successive if they or their complement set of rows are successive, where
the complement of a set of rows includes all rows except that set of rows.
}
Note that $\sin u$ is strictly increasing for  $u \in [0, \pi/2]$, and the circular distance cannot be greater than $n/2$ in this problem.

In such circumstances where all rows in ${\mathcal I_{r_k}}$ are (circularly) successive, \eqref{eq:detVk} is minimal and reduces to
 \begin{align}
 \det(V_kV_k^H) = \frac{2^{k(k-1)}}{k^k}\prod_{\substack r=1}^{k-1}{\left( \sin^2 \frac{\pi}{n}r\right)^{k-r}}.
 \label{eq:detVk2}
 \end{align}
The other extreme case comes up when $n=Mk$ ($M$ is a positive integer) provided that $G_{k}$ consists of every $M$th row of $G$.
In such a case, \eqref{eq:detVk} simplifies to $1$, because

\begin{equation}
\begin{aligned}
 \det(V_kV_k^H) &=  \frac{2^{k(k-1)}}{k^k}\prod_{\substack r=1}^{k-1}{\left( \sin^2 \frac{\pi}{n}Mr\right)^{k-r}}  \\
  &=  \frac{2^{k(k-1)}}{k^k}\prod_{\substack r=1}^{k-1}{\left( \sin^2 \frac{\pi}{k}r\right)^{k-r}}\\& =1,
 \label{eq:detVm}
\end{aligned}
\end{equation}
where the last step is because of \eqref{eq:formula}.
Recall that this gives the best $V_{k}$ (and equivalently $G_{k}$), in light of \eqref{eq:Omax}.
For such a $G_{k}$, it is easy to see that $G_{\mathrm{sys}}$ stands for
a ``tight'' systematic frame and minimizes the MSE for a given number of quantization levels.
Effectively, such a frame is performing {\it integer oversampling}.
There are $M$ such frames; they all have the same spectrum, though.

Recall that, from  \eqref{eq:Oeq}--\eqref{eq:det}
and Theorem~\ref{thm2}, $\det(V_kV_k^H) < 1$ for $n \neq Mk$. For such an $(n,k)$
frame, the systematic rows cannot be equally
spaced in the corresponding systematic frame; instead, we may explore a systematic frame
in which the circular distance between successive systematic samples is
as evenly as possible. Then, the circular distance between each successive systematic rows
is either $\lfloor n/k \rfloor $ or $\lceil n/k \rceil $. More precisely, if $l$ and $m$, respectively,
represent the number of systematic rows with circular distance equal to
$\lceil n/k \rceil $ and $\lfloor n/k \rfloor $, they must satisfy 
\begin{align}
\begin{cases}
l+m=k,\\
l\lceil \frac{n}{k} \rceil +m\lfloor \frac{n}{k} \rfloor =n.
\label{consts}
\end{cases}
\end{align}

\noindent In the following theorem, we prove that the best performance is
achieved when the systematic rows are as equally spaced as possible, i.e., when
 \eqref{consts} is satisfied.

\begin{thm}
When encoding with an $(n,k)$ systematic DFT frame in \eqref{eq:Gsys2} and decoding with linear reconstruction,
for the noise model \eqref{eq:Q} and given a same number of quantization levels,
 the MSE is minimum when there are $l= n- \lfloor n/k \rfloor k$ systematic rows with successive circular distance $\lceil n/k \rceil $
 and the remaining $m= k -l$ systematic rows have a successive circular distance equal to $\lfloor n/k \rfloor $.
\label{thm7}
\end{thm}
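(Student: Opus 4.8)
The plan is to reduce the MSE-minimization to the determinant-maximization already established in \eqref{eq:Oeq}--\eqref{eq:det}, and then to show that among all selections $\mathcal I_{r_k}$ of $k$ rows of $G$, the one whose successive circular gaps are as balanced as possible (i.e.\ satisfying \eqref{consts}) maximizes $\det(V_kV_k^H)$ in \eqref{eq:detVk}. By Theorem~\ref{thm5} the MSE (for a fixed number of quantization levels) is minimized exactly when $\sum_i 1/\lambda_i$ is minimized, which by Theorem~\ref{thm6} happens precisely when $\prod_i\lambda_i=\det(V_kV_k^H)$ is maximized. So the entire content of the theorem is the combinatorial claim that the balanced gap pattern maximizes
\[
\det(V_kV_k^H)=\frac{1}{k^k}\prod_{\substack{p<q\\ p,q\in\mathcal I_{r_k}}}4\sin^2\tfrac{\pi}{n}(q-p).
\]
First I would reformulate: writing the chosen rows cyclically and letting $g_1,\dots,g_k$ be the successive circular gaps (so $g_j\ge 1$ and $\sum_j g_j=n$), every pairwise circular distance appearing in the product is a consecutive partial sum $g_a+g_{a+1}+\cdots+g_{b}$ (indices mod $k$) of length $\le k-1$; since $\sin\frac{\pi}{n}u$ is increasing on $[0,n/2]$ and all these distances are $\le n/2$, maximizing the product is a question purely about the gap vector $(g_1,\dots,g_k)$.

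Next I would carry out the optimization over gap vectors by an exchange/smoothing argument. Suppose some configuration has two gaps differing by at least $2$, say $g_a\ge g_b+2$. I would compare the objective for $(\dots,g_a,\dots,g_b,\dots)$ with the configuration obtained by replacing them with $(\dots,g_a-1,\dots,g_b+1,\dots)$, keeping all other gaps fixed. The ratio of the two determinants is a ratio of products of $\sin^2$ terms; the terms that change are exactly those partial sums whose window includes $a$ but not $b$ (these decrease by one unit of gap) or includes $b$ but not $a$ (these increase by one). The key inequality to establish is that the net effect is an increase, which follows from strict log-concavity of $u\mapsto \log\sin\frac{\pi}{n}u$ on $(0,n/2)$: pairing up a "shrinking" window with the corresponding "growing" window of the same length, each such pair contributes a factor $\sin\frac{\pi}{n}(s+1)\sin\frac{\pi}{n}(s'-1)$ versus $\sin\frac{\pi}{n}s\,\sin\frac{\pi}{n}s'$ with $s\le s'$, and log-concavity makes the post-exchange product strictly larger. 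Iterating such balancing exchanges, which strictly increases the objective and drives $\max_j g_j-\min_j g_j$ down, terminates only at a gap vector with all gaps in $\{\lfloor n/k\rfloor,\lceil n/k\rceil\}$, i.e.\ exactly the configuration described by \eqref{consts} with $l=n-\lfloor n/k\rfloor k$ gaps equal to $\lceil n/k\rceil$ and $m=k-l$ equal to $\lfloor n/k\rfloor$. (When $k\mid n$ this collapses to the uniform gap $n/k$, recovering the tight frame of \eqref{eq:detVm}; the general statement subsumes that case.)

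Finally I would note that the resulting configuration is unique up to circular rotation and reversal of the row set, so by Lemma~\ref{lem0} and the shift/reversal invariance of the spectrum of the frame operator, all optimal $G_k$ yield the same eigenvalues and hence the same (minimal) MSE, which makes the "is minimum" statement well posed. The main obstacle I anticipate is the bookkeeping in the exchange step: one must verify that the windows crossing $a$-but-not-$b$ and those crossing $b$-but-not-$a$ can be matched up in same-length pairs with the stated monotonicity of their arguments, and that no window straddling both $a$ and $b$ (which is unaffected) or of length $\ge k$ (which does not occur) spoils the pairing. This is essentially a careful combinatorial accounting on the cyclic index set, combined with the single analytic fact that $\log\sin\frac{\pi}{n}u$ is strictly concave on $(0,n/2)$; once the pairing is set up cleanly, the inequality is immediate and the induction closes.
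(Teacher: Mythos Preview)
Your overall strategy---reduce the MSE question to maximizing $\det(V_kV_k^H)$ via Theorem~\ref{thm6}, then run a gap-smoothing argument using the strict concavity of $u\mapsto\log\sin(\pi u/n)$---is sound, and it is genuinely different from the paper's route. The paper also reduces to the determinant, but then compares configurations obtained by shifting a single \emph{row} (not balancing two gaps) and controls the resulting ratio by pairing first/last factors and applying the product-to-sum identity $\sin x\sin y=\tfrac12(\cos(x-y)-\cos(x+y))$; it then argues inductively over further shifts. Your approach is more systematic and would, if completed, yield the theorem in one stroke.

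The real gap is precisely where you flag it: the pairing in the exchange step does not go through as described. In the forward-arc parametrization of the product, the number of windows containing $a$ but not $b$ is $\alpha(\beta-\alpha)$ while the number containing $b$ but not $a$ is $(\beta-\alpha)(k-\beta)$ (with $\alpha<\beta$), so these counts need not match; equivalently, the number of \emph{separating chords} is $|P_1|\,|P_2|$, which can be odd (e.g.\ $k=4$, gaps $(10,1,44,45)$, $a=1$, $b=2$ gives three separating chords), so no perfect pairing exists. Even when a pairing by window length exists, the inequality $s\le s'$ you need is not automatic: the arc containing the larger gap $g_a$ can still be the shorter of the two, as in the same example where the single shrinking window has length $10<n/2$. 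A clean way to rescue the step is to observe that, with $g_a+g_b=c$ fixed, the product over separating chords is \emph{symmetric} under $g_a\leftrightarrow g_b$: writing the $\alpha$-arc lengths as $g_a+a(p)+b(q)$ and noting that $\{a(p)\}$ and $\{b(q)\}$ are each closed under $x\mapsto L_i-x$, one gets $P(g_a)=P(c-g_a)$; together with the log-concavity you already invoked this forces $P$ to be maximized at $g_a=c/2$, which is exactly the exchange inequality. Finally, your closing uniqueness claim is false and should be dropped: once all gaps lie in $\{\lfloor n/k\rfloor,\lceil n/k\rceil\}$ the arrangement can still matter (for $(n,k)=(10,4)$ the patterns $(2,3,2,3)$ and $(2,2,3,3)$ give different determinants), so the optimal systematic frames are \emph{not} all related by rotation and reversal. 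This does not hurt the theorem as stated, which only asserts the optimal gap multiset.
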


\begin{proof}
See Appendix~\ref{sec:App4}.
\end{proof}

Effectively, the above theorem is generalizing  Theorem~\ref{thm5}.
Note that when $n = Mk$, $\lfloor n/k \rfloor = \lceil n/k \rceil=M$ and there exist $k$
systematic rows with equal distance; in this case, Theorem~\ref{thm7} reduces to
Theorem~\ref{thm5} and the  corresponding systematic frame is tight.
The optimality of this case was proved in \eqref{eq:detVm}. When $n \ne Mk$,
we cannot have a systematic frame with equally spaced systematic rows; however,
the best performance is still achieved when the circular distance between
the systematic (parity) rows is as evenly as possible, as detailed above.
Note that in either case $d_{\mathrm{min}}$, the minimum distance between the systematic
rows, is $ \lfloor n/k \rfloor$. This is a necessary condition for an
optimal systematic frame, as shown in the proof of Theorem~\ref{thm7}.
Further, to satisfy Theorem~\ref{thm7}, the minimum distance between
the parity rows must be $\bar d_{\mathrm{min}}= \lfloor n/(n-k) \rfloor$.

\begin{table*}[t]
\caption{Eigenvalues structure for two systematic DFT frames with different codeword patterns.
A ``$\times$'' and ``$-$'' respectively represent data (systematic) and parity samples.} \label{table1}
\centering
\scalebox{.99}{
\begin{tabular}{llcccccc}
\toprule %
\multicolumn {2}{c}{$\mathop{\mathrm{Code \quad\quad Codeword }} \qquad\;  $} & $\lambda_{\min}$  & $\lambda_{\max}$ & $\sum_{i=1}^k 1/\lambda_i$ & $\prod_{i=1}^k\lambda_i$ \\
\multicolumn {2}{c}{$\mathop{\mathrm{\qquad\quad\quad\quad  patern}} \qquad\;  $} &   &  &  &  \\ \toprule \toprule %
& $\times\times\times  - - -$  & $0.0572$ & $1.9428$ & $19$ & $0.1111$ &   \\
\cmidrule (r){3-7}
\multirow {2}*{$(6, 3)$}
& $\times\times -\times  - -$ & $0.2546$ & $1.7454$ & $5.5$ & $0.4444$ &   \\
\cmidrule (r){3-7}
& $\times\times --\times   -$ & $0.2546$ & $1.7454$ & $5.5$ & $0.4444$ &   \\
\cmidrule (r){3-7}
& $\times -\times -\times-$ & $1$ & $1$ & $3$ & $1$ &    \\
\midrule
& $\times\times\times \times\times - -$  & $0.0396$ & $1.4$ & $28.70$ & $0.0827$ &   \\
\cmidrule (r){3-7}
\multirow {2}*{$(7, 5)$}
& $\times\times \times\times -\times - $ & $0.1506$ & $1.4$ & $10.32$ & $0.2684$ &   \\
\cmidrule (r){3-7}
& $\times\times -\times\times -\times $ & $0.3110$ & $1.4$ & $7.40$ & $0.4173$ &   \\
\cmidrule (r){3-7}
& $\times -\times \times\times -\times $ & $0.3110$ & $1.4$ & $7.40$ & $0.4173$ &   \\
\midrule
& $\times\times\times\times\times - - - - -$  & $0.0011$ & $1.9989$ & $908.21$ & $4.46 \times 10^{-4}$ &   \\
\cmidrule (r){3-7}
& $\times\times\times\times -\times  - - - -$ & $0.0041$ & $1.9959$ & $249.94$ & $0.0047$ &   \\
\cmidrule (r){3-7}
& $\times\times\times\times - -\times  - - -$  & $0.0110$ & $1.9890$ & $96.09$ & $0.0122 $ &   \\
\cmidrule (r){3-7}
& $\times\times\times - \times - - - \times -$  & $ 0.0202$ & $1.9798$ & $53$ & $0.0400$ &   \\
\cmidrule (r){3-7}
& $\times\times\times - - \times\times - - -$  & $0.0496$ & $1.9504$ & $25.64$ & $0.0489$ &   \\
\cmidrule (r){3-7}
& $\times\times\times - \times -  \times - - -$  & $ 0.0310$ & $1.9690$ & $35.73$ & $0.0611$ &   \\
\cmidrule (r){3-7}
\multirow {2}*{$(10, 5)$}
& $\times \times \times - - \times- -\times-$  & $0.0512$ & $1.9488$ & $23.41$ & $0.0838$ &   \\
\cmidrule (r){3-7}
& $\times \times \times - - \times-\times- -$  & $0.0835$ & $1.9165$ & $16$ & $0.1280$ &   \\
\cmidrule (r){3-7}
& $\times\times- \times \times - - \times - -$  & $0.1056$ & $1.8944$ & $13.79$ & $0.1436$ &   \\
\cmidrule (r){3-7}
& $\times \times - - \times \times - -\times -$ & $0.2497$  & $1.7503$ & $9.56$ & $0.2193$ &    \\
\cmidrule (r){3-7}
& $\times \times - \times -\times - -\times -$ & $0.1902$  & $1.8098$ & $8.86$ & $0.3351$ &    \\
\cmidrule (r){3-7}
& $\times - \times - \times -\times - -\times $ & $0.2377$  & $1.7623$ & $7.77$ & $ 0.4189$ &    \\
\cmidrule (r){3-7}
& $\times -\times -\times-\times -\times  -$ & $1$ & $1$ & $5$ & $1$ &    \\
\bottomrule
\end{tabular}
}

\end{table*}

\subsection{Numerical Examples}
\label{sec:numerical}

Numerical calculations confirm that ``evenly" spaced data samples gives rise to
systematic frames with the best performance.
When a systematic frame is doing integer oversampling, 
we end up with tight systematic frames. The first  and last codes in Table~\ref{table1}
are examples of this case. When $n\neq Mk$, data samples cannot be equally
spaced; however, as it can be seen from the second code in Table~\ref{table1},
still the best performance is achieved when they are as equally spaced as possible.
In this table, ``$\times$'s'' and ``$-$'s'' represent data and parity samples, respectively.
Moreover, we observe that
circularly shifted codeword patterns behave the same
(e.g., in the $(7, 5)$ code, frames with pattern $\times -\times \times\times -\times $ and $\times\times -\times\times -\times $
have the same performance).
 Also, reversal of a codeword pattern yields a codeword with the same performance
(e.g., $\times\times -\times - -$ is shifted version of reversed $\times\times --\times   -$ in the $(6,3)$ code).
These properties hold in general, as stated below.

\begin{property}
Circular shift of ${\mathcal I}_{r_k}$, the systematic rows of a systematic frame with analysis frame $G_{\mathrm{sys}}$, does not
change the spectrum of $G_{\mathrm{sys}}^HG_{\mathrm{sys}}$.
\label{property1}
\end{property}

\begin{property}
Reversal of ${\mathcal I}_{r_k}$ yields a systematic frame with the same spectral properties.
\label{property2}
\end{property}

\begin{proof}
From \eqref{eq:Gsys2} we obtain
\begin{align}
\lambda_i(G_{\mathrm{sys}}^HG_{\mathrm{sys}})= \frac{n/k}{\lambda_i(G_{k}G_{k}^H)},
\end{align}
for $i=1,\hdots,k$. But $G_{k}G_{k}^H$ is invariant to the circular shift of rows of $G$ that make ${\mathcal I}_{r_k}$, as long as all rows are shifted
the same amount in the same direction. This can be seen from the proof of Lemma~\ref{lem1} in \eqref{eq:Toep}
by defining $r'=r+c$ where $r'$ represent the shifted rows by a constant $c$ and $r \in {\mathcal I}_{r_k}$. This proves
Property~\ref{property1}. Likewise,
let $r''=n+1-r$ be the reversed row indices. Again, from \eqref{eq:Toep}, it is clear that
Property~\ref{property2} holds.
\end{proof}
\noindent These properties together show that the frame operators
of systematic frames ($G_{\mathrm{sys}}^HG_{\mathrm{sys}}$), in which the ``relative" circular distance
among the systematic rows are the same, inherit the same spectrum and thus show the same performance.

\subsection{Number of Systematic Frames}
\label{sec:number}
The number of systematic frames is obviously finite but their performance
depends on the position of the systematic rows, or equivalently,
the position of data (or parity) samples in the associated codewords, and can be the same
for different systematic frames. In what
follows, we derive an  upper and  lower bound on the number of systematic frames
with different spectrum. In other words, we categorize these frames
based on their performance. To this end, we observe that the problem of
finding $k \times k$ submatrices of an $n \times k$ matrix
can be viewed as finding different $k$-subsets of a set with $n$ elements.
This is given by the binomial coefficient $\binom{n}{k}$ and is also equivalent
to the number of systematic frames. As stated earlier in Property~\ref{property1},
circular shift of a codeword pattern does not change its spectrum, and so its performance.
We define a {\it coset} as square submatrices that result in a same performance.
Each coset has at least $n$ elements ($k$-subsets), as shown in Table~\ref{table2}.
To find these elements, it suffices to circularly shift a subset $n$ times.
Equivalently, for a given $k$-subset, we simply add up $1$ to each element of a subset.
Note that, the subsets elements are $k$ row indices of $G_{n \times k}$ and thus
cannot be greater than $n$. Therefore, once a shifted index $x$ becomes greater than $n$,
we replace it with $\llangle x \rrangle_n$
where $\llangle x \rrangle_n \triangleq x-dn \;\text{if} \; dn+1 \leq x \leq dn+n, d\in \mathbb Z.$
Obviously, each coset has at least $n$ subsets since $n-1$ circular shifts of
a given subset are distinct; all these subsets have the same relative distance, though.
This can be seen in
Table~\ref{table2}. Thus, it is clear that the number of cosets is the bounded by
\begin{align}
n_c \leq u =\frac{1}{n}\binom{n}{k}.
\label{eq:binom}
\end{align}

\noindent Let $\mathcal I_{r_k}^r$ denote the reversal of
$\mathcal I_{r_k}=\{i_{r_1}, i_{r_2}, \hdots, i_{r_k}\}$ where
\begin{align}
\mathcal I_{r_k}^r \triangleq \llangle n+1-\mathcal I_{r_k} \rrangle_n.
\label{eq:rev}
\end{align}
This operation is performed on every element of $\mathcal I_{r_k}$.
One can see that reversal of a subset
does not change its distance and spectrum, owing to Property~\ref{property2}. This can reduce
the number of cosets.
For example, in Table~\ref{table2}, the reversal of $\{1,2,4\}$, which is the {\it coset leader} in $\mathrm{C}_2$,
is $\{7,6,4\}$ which belongs to $\mathrm{C}_5$. This indicates $\mathrm{C}_2$ and $\mathrm{C}_5$ are essentially one coset.
 The bound in \eqref{eq:binom}
is tight if and only if there are $u$ self-reversal cosets.
Trivial examples of such a code are achieved when $k=n-1$ or $k=1$.
A self-reversal coset is a coset that the reversal of its elements belong to itself, e.g.,
$\mathrm{C}_1$, $\mathrm{C}_3$, and $\mathrm{C}_4$ in Table~\ref{table2}.

On the other hand, $n_c \geq u/2$ is a lower bound because there
cannot be more than one reversal for a given coset.
It can be further seen that the coset with smallest weight ($\mathrm{C}_1$) is always self-reverse,
i.e., the reversal of each element of $\mathrm{C}_1$ is its own element for any $(n,k)$ frame.
This implies that the lower bound is not achievable. Therefore,
\begin{align}
\frac{1}{2n}\binom{n}{k} < n_c  \leq \frac{1}{n}\binom{n}{k}.
\label{eq:binom2}
\end{align}
One can check that the first two frames in Table~\ref{table1} reach the upper bound
 $\lfloor \frac{1}{n}\binom{n}{k} \rfloor$ whereas the
third one satisfies the lower bound $\lceil \frac{1}{2n}\binom{n}{k} \rceil$.

\begin{table}[t]
\caption{Different cosets of $(7,3)$ DFT frame and their corresponding relative
distances and spectrums. The Coset leaders are in boldface.} \label{table2}
\centering
\scalebox{.99}{
\begin{tabular}{llcccccc}
\toprule %
\multicolumn {1}{c}{$$} & \quad $\mathrm{C}_1$  & $\mathrm{C}_2$ & $\mathrm{C}_3$ & $\mathrm{C}_4$ & $\mathrm{C}_5$ \\
\toprule \toprule %
   \multirow {1}*{$\mathrm{Leader}$}
 &  $\bold{1\;\; 2\;\; 3}$ & $\bold{1\;\; 2\;\; 4}$ & $\bold{1\;\; 2\;\; 5}$ & $\bold{1\;\; 3\;\; 5}$ & $\bold{1\;\; 3\;\; 4}$ &  \\
  & $2\;\; 3\;\; 4$ & $2\;\; 3\;\; 5$ & $2\;\; 3\;\; 6$ & $2\;\; 4\;\; 6$ &  $2\;\; 4\;\; 5$ \\
  & $3\;\; 4\;\; 5$ & $3\;\; 4\;\; 6$ & $3\;\; 4\;\; 7$ & $3\;\; 5\;\; 7$ &  $3\;\; 5\;\; 6$ \\
  & $4\;\; 5\;\; 6$ & $4\;\; 5\;\; 7$ & $1\;\; 4\;\; 5$ & $1\;\; 4\;\; 6$ &  $4\;\; 6\;\; 7$ \\
 & $5\;\; 6\;\; 7$ & $1\;\; 5\;\; 6$ & $2\;\; 5\;\; 6$ & $2\;\; 5\;\; 7$ &  $1\;\; 5\;\; 7$ \\
   & $1\;\; 6\;\; 7$ & $2\;\; 6\;\; 7$ & $3\;\; 6\;\; 7$ & $1\;\; 3\;\; 6$ &  $1\;\; 2\;\; 6$ \\
  & $1\;\; 2\;\; 7$ & $1\;\; 3\;\; 7$ & $1\;\; 4\;\; 7$ & $2\;\; 4\;\; 7$ &  $2\;\; 3\;\; 7$ \\
  \cmidrule (r){1-7}
  \multirow {1}*{$\mathrm{Distance}$}
 & $1\;\; 1\;\; 2$ & $1\;\; 2\;\; 3$ & $1\;\; 3\;\; 3$ & $2\;\; 2\;\; 3$ &  $1\;\; 3\;\; 2$ \\
  \cmidrule (r){1-7}
  \multirow {1}*{$\mathrm{Weight}$}
 & \quad $4$ & $6$ & $7$ & $7$ &  $6$ \\
 \cmidrule (r){1-7}
  \multicolumn {1}{c}{$\lambda_1$} & $2.1558$  & $ 1.7539$ & $1.9066$ & $1.2673$ & $1.7539$ \\
  \multicolumn {1}{c}{$\lambda_2$} & $0.8150$  & $1.1133$ & $0.8424$ & $1.1601$ & $1.1133$ \\
  \multicolumn {1}{c}{$\lambda_3$} & $0.0292$  & $0.1328$ & $0.2510$ & $0.5726$ & $ 0.1328$ \\
\bottomrule
\end{tabular}
}
\end{table}

\section{Conclusions}
\label{sec:con}
We have introduced the application, proposed the construction method, and analyzed the performance
of systematic DFT frames in this paper. Numerous  systematic DFT frames
can be made out of one DFT frame;
the performance of these frames differs depending on
the relative position of the systematic and parity samples in the codeword.
We proved that evenly spaced systematic (or parity)
samples result in the minimum mean-squared reconstruction error,
whereas the worst performance is expected when the parity samples are circularly consecutive.
Further, we found the conditions for which a systematic DFT frame 	
can be tight, too. A tight systematic DFT frame can be realized only
if the frame is performing integer oversampling and systematic samples are circularly equally spaced.
Finally, for each DFT frame, we classified systematic DFT frame based on their performance.

It would be interesting to extend this work to oversampled DFT filter banks, an infinite-dimension of DFT frames,
since oversampled filter banks can be used for
error correction \cite{labeau2005oversampled}.


\section*{Acknowledgement}
The authors wish to thank the reviewers
of the conference and journal versions of this work for several valuable comments. We also thank
Sina Hamidi Ghalehjegh and Mohsen Akbari for fruitful discussions and their comments.

\section{Appendix}

\subsection{Proof of Theorem~\ref{thm2}}
\label{sec:App1}
\begin{proof}
Let $n=Mk+l, 0<l<k$, then $G$ can be partitioned  as $G=[G_{k}^{H} \,|\, G_{k}^{1H} \,|\,\cdots \,
|\,G_{k}^{(M-1)H}\,|\,G_{k\times l}^{MH}]^H$.
In general, $G_{k},G^{1}_{k}, \hdots ,G^{M-1}_{k}$ and $G_{k\times l}^{M}$ include arbitrary rows of $G$, hence they have
different spectrums, i.e., different sets of eigenvalues.
%
Suppose, for the purpose of contradiction, that $\lambda_{k}(G_{k}^HG_{k})=1$; this can occur only if $G_{k}$ consist of the rows of $G$
such that the distance between each two successive rows is at least $M$.\footnote{$\lambda_{k}(G_{k}^HG_{k})=1$ is the
optimal solution for \eqref{eq:Omin} and necessitate $d_{\mathrm{min}}=M$, as discussed in Theorem~\ref{thm7}.}
Such an arrangement guarantees the existence of $G_{k}^{1}, \hdots ,G_{k}^{M-1}$
so that $G_{k}^{mH}G_{k}^{m}$, for any $1 \leq m\leq M-1$, has the same spectrum as $G_{k}^HG_{k}$.
To find the row indices corresponding to $G^{m}_{k}$,
we can simply add $m$ to each row index of $G_{k}$.
Then, to show these matrices have the same spectrum, we use  Lemma~\ref{lem0}.
Given a $G_{k}$, one can verify that $(G^{m}_{k})_{i,j} =  e^{j\frac{2\pi m}{n}} (G_{k})_{i,j}$ and thus
 $(G_{k}^{m})_{i,j}^H =  e^{-j\frac{2\pi m}{n}}(G_{k})_{i,j}^H $. Therefore,
 $G_{k}^{mH} G^{m}_{k}$ and $G_{k}^{H} G_{k}$
 have the same spectrum for any $1 \leq m\leq M-1$.
 Next, we see that $G^HG= A+B$ in which $A= G^{H}_{k} G_{k}+ \cdots +G^{(M-1)H}_{k} G^{M-1}_{k}$ and
$B=G^{MH}_{k \times l}G^M_{k \times l}$. Then, in consideration of the above discussion, $\lambda_{i}(A)= M\lambda_{i}(G_{k}^H G_{k})$ for any $1\leq i \leq k$.
Hence, from \eqref{eq:Weyl2}, for $i=1, j=k$, we will have
  \begin{equation}
\begin{aligned}
  \lambda_{k}(A)+\lambda_{1}(B)&\leq   \lambda_{1}(A+B)\\
  \Leftrightarrow  M\lambda_{k}(G_{k}^H G_{k}) &\leq\frac{n}{k} -\lambda_{1}(B) \\
    \Leftrightarrow  \lambda_{k}(G_{k}^H G_{k}) &\leq    \frac{\frac{n}{k}-1}{M} = \frac{\frac{n}{k}-1}{\lfloor\frac{n}{k} \rfloor }<1,
\end{aligned}
\label{eq:thm2}
 \end{equation}
where the last line follows using $\lambda_{1}(B)\geq 1$ from Theorem~\ref{thm1}.
But this is contradicting our assumption $\lambda_{k}(G_{k}^HG_{k})=1$, and thus
completes the proof that, for $n\ne Mk$, the largest possible $\lambda_{k}(G_{k}^H G_{k})$ is strictly
less than 1, for any $G_{k}$.\footnote{Note that when $n=Mk$, $B$ is an empty matrix and we must plug $\lambda_{1}(B)=0 $
into \eqref{eq:thm2} which result in $\lambda_{k}(G_{k}^H G_{k}) \leq 1$
and does not guarantee a bound strictly less than 1.}

The proof of the other bound ($\lambda_{1}(G_{k}^H G_{k})> 1$) is then immediate because
$\sum_{\substack i=1}^{k}{\lambda_{i}(G_{k}^H G_{k})} =\sum_{\substack i=1}^{k}a_{ii}=k$.
\end{proof}

\subsection{Proof of Theorem~\ref{thm6}}
\label{sec:App2}
\begin{proof}
Let $\{\lambda_i\}_{i=1}^k$ be the eigenvalues of $G_{k}G_{k}^H$.
From Lemma~\ref{lem1}, we know that $\sum_{\substack i=1}^{k}{\lambda_{i}(G_{k}^H G_{k})} =k$.
Then, subject to this constraint, by using the Lagrangian method \cite{boyd2004convex}, its is straightforward
to see that the optimal values of the optimization problems in both sides of \eqref{eq:Oeq} are $\lambda_i=1,  i=1,\hdots,k$.

\end{proof}

\subsection{Proof of Theorem~\ref{thm7}}
\label{sec:App4}
\begin{proof}
Consider an $(n,k)$ DFT frame, let $M=\lfloor n/k \rfloor $, and assume that all rows in
${\mathcal I_{r_k}}$, except the first and last rows, are equally spaced with distance $M$
(without loss of generality, we assume $i_{r_1}=1$, then $i_{r_j}=(j-1)M+1$, $j \le k$).
Hence $d_{\mathrm{min}}=M$, where the minimum distance $d_{\mathrm{min}}$ is defined as the smallest circular distance among the selected rows.
 In such a setting, from \eqref{eq:detVk} and similar to \eqref{eq:detVm}, we can write
 \begin{align}
 \det(V_kV_k^H) = \frac{2^{k(k-1)}}{k^k}\prod_{\substack r=1}^{k-1}{\left( \sin^2 \frac{\pi}{n}Mr\right)^{k-r}}.
 \label{eq:detVk2}
 \end{align}

We prove that, in view of \eqref{eq:Omax}, the systematic frame corresponding
to the above arrangement has better performance than any other arrangement
in which $d_{\mathrm{min}}$ among the systematic rows is less than $M$.
To this end, we first assume that all selected rows in $\mathcal I_{r_k}$ remain the same
except one row which is shifted one unit in a way that $d_{\mathrm{min}}$ decreases.
For example, without loss of generality, consider $\mathcal I'_{r_k}$
for which $i'_{r_1}=2,$ $ i'_{r_j}=i_{r_j}, 1 < j \le k$; hence $d_{\mathrm{min}}=M-1$.
Then, from \eqref{eq:detVk}, we  obtain
 \begin{align}
\frac{ \det(V_kV_k^H)|_{\mathcal I'_{r_k}}}{ \det(V_kV_k^H)|_{\mathcal I_{r_k}}}= \frac{\prod_{\substack r=1}^{k-1}{ \sin^2 \frac{\pi}{n}(Mr -1)}}{\prod_{\substack r=1}^{k-1}{ \sin^2 \frac{\pi}{n}Mr}} < 1.
 \label{eq:ratio}
 \end{align}

\noindent To prove the inequality, equivalently, we show that
 \begin{align}
  \frac{\sin \frac{(M -1)\pi}{n}\sin \frac{(2M -1)\pi}{n} \cdots \sin \frac{((k-1)M -1)\pi}{n}}{\sin \frac{M\pi}{n}\sin \frac{2M\pi}{n} \cdots \sin \frac{(k-1)M\pi}{n} } <1.
 \label{eq:ratio1}
 \end{align}

\noindent We break up this inequality into $\lfloor k/2 \rfloor $ inequalities, each of which strictly less than one.
First, consider the first and last terms in the numerator and denominator. We can write
 \begin{align}
  \frac{\sin \frac{(M -1)\pi}{n} \sin \frac{((k-1)M -1)\pi}{n}}{\sin \frac{M\pi}{n} \sin \frac{(k-1)M\pi}{n} }
  &=  \frac{\cos \frac{(k-2)M\pi}{n} - \cos \frac{(kM-2)\pi}{n}}{\cos \frac{(k-2)M\pi}{n} - \cos \frac{kM\pi}{n}}\nonumber \\
  & < 1,
 \label{eq:ratio2}
 \end{align}
where the inequality follows since $\cos \frac{(kM-2)\pi}{n} >  \cos \frac{kM\pi}{n}$, as $ \frac{kM}{n}\pi \le \pi $.
Likewise, for the second and penultimate terms we have
 \begin{align}
  \frac{\sin \frac{(2M -1)\pi}{n} \sin \frac{((k-2)M -1)\pi}{n}}{\sin \frac{2M\pi}{n} \sin \frac{(k-2)M\pi}{n} }
  &=  \frac{\cos \frac{(k-4)M\pi}{n} - \cos \frac{(kM-2)\pi}{n}}{\cos \frac{(k-4)M\pi}{n} - \cos \frac{kM\pi}{n}}\nonumber \\
  & < 1.
 \label{eq:ratio2}
 \end{align}
 A similar reasoning can be used for other terms that are equally spaced from the two ends.

Clearly, the same argument is valid when $2< i'_{r_1}<M$  and the other rows are the same, i.e., $i'_{r_j}=i_{r_j}, 1 < j \le k$
and $d_{\mathrm{min}}=M-i'_{r_1}$. Moreover, when more than one row index is changed, in a way that two or more selected rows have a distance less than $M$,
the above argument is valid and we can show that new determinant is even less than the case with one changed index.
In fact, in such a case, it is easier to compare the new one with its parent; i.e.,
to compare the case with two changes with the case with one change.      As a result, we can see that
any combination of rows with $d_{\mathrm{min}}<M$ performs worse than the case with $d_{\mathrm{min}}=M$, on account of (37);  that is, $d_{\mathrm{min}}=M$
is necessary condition for optimality. In other words, that optimal systematic frame must satisfy $d_{\mathrm{min}}=M$.

Next, we show that among systematic frames with $d_{\mathrm{min}}=M$ the one that satisfies \eqref{consts} is the best.
That is, the optimal systematic frame has
$l= n- \lfloor n/k \rfloor k$ systematic rows with successive circular distance of $\lceil n/k \rceil $
 and $m= k -l$ systematic rows with successive circular distance of $\lfloor n/k \rfloor$. To
prove this, again we compare $\det(V_kV_k^H)$ in  \eqref{eq:detVk} for this case and the other cases
with $d_{\mathrm{min}}=M$. The arguments are very similar to what we used above.
Before moving on, we should mention that for $l\in\{0,1,k-1\}$ the proof in the first part is sufficient.

 Let $\mathcal I^o_{r_k}$ denote the set of rows satisfying the constraints in \eqref{consts};
 obviously, $d_{\mathrm{min}}=M$.  We claim that any other selection of systematic rows, for which $d_{\mathrm{min}}$ is $M$,
 results in a smaller $\det(V_kV_k^H)$; that is, $\det(V_kV_k^H)|_{\mathcal I_{r_k}} < \det(V_kV_k^H)|_{\mathcal I^o_{r_k}}$.
 Let us evaluate the case where
 only the row index for one of those $l$ rows varies, provided that $d_{\mathrm{min}}=M$ is kept.\footnote{ Note that, with this shift of row,
 we are looking for an arrangement of a systematic frame that does not satisfy \eqref{consts}; otherwise, $\det(V_kV_k^H)$ will not
 vary, as the frame properties has not changed essentially. More specifically, a new, different arrangement will introduce a new distance equal to $\lceil n/k \rceil+1 $.}
 We then have
 \begin{align}
\frac{ \det(V_kV_k^H)|_{\mathcal I_{r_k}}}{ \det(V_kV_k^H)|_{\mathcal I^o_{r_k}}}= \frac{\prod_{\substack r=1}^{k-1}{ \sin^2 \frac{\pi}{n}Mr}}{\prod_{\substack r=1}^{k-1}{ \sin^2 \frac{\pi}{n}(Mr +1)}} < 1.
 \label{eq:ratio3}
 \end{align}
Again it suffice to prove that
 \begin{align}
  \frac{\sin \frac{M\pi}{n}\sin \frac{2M\pi}{n} \cdots \sin \frac{(k-1)M\pi}{n} } {\sin \frac{(M +1)\pi}{n}\sin \frac{(2M +1)\pi}{n} \cdots \sin \frac{((k-1)M +1)\pi}{n}} <1,
 \label{eq:ratio4}
 \end{align}
and this can be done by the same divide and conquer approach, used in the first part of this proof. For instance,
for the first and last terms in the numerator and denominator we have
\begin{align}
  \frac{\sin \frac{M\pi}{n} \sin \frac{(k-1)M\pi}{n} }{\sin \frac{(M +1)\pi}{n} \sin \frac{((k-1)M +1)\pi}{n}}
  &=  \frac{\cos \frac{(k-2)M\pi}{n} - \cos \frac{kM\pi}{n}}{\cos \frac{(k-2)M\pi}{n} - \cos \frac{(kM+2)\pi}{n}}\nonumber \\
  & < 1,
 \label{eq:ratio5}
 \end{align}
where the inequality follows for $\cos \frac{(kM+2)\pi}{n} <  \cos \frac{kM\pi}{n}$.
Finally, the other cases, where two or more rows change, can be proved comparing their determinant with their ancestor's with a similar reasoning.
 This completes the proof that  a systematic frame with the most evenly spaced systematic rows,
 or equivalently data samples in the corresponding codewords, is the best in the minimum MSE sense.

\end{proof}


\begin{thebibliography}{10}

\bibitem{Vaezi2012frame}
M.~Vaezi and F.~Labeau, ``{Systematic DFT frames: Principle and eigenvalues
  structure},'' in {\em Proc. International Symposium on Information Theory
  (ISIT)}, pp.~2436--2440, 2012.

\bibitem{blahut2003algebraic}
R.~E. Blahut, {\em {Algebraic Codes for Data Transmission}}.
\newblock New York: Cambridge University Press, 2003.

\bibitem{Marshall}
T.~Marshall~Jr., ``Coding of real-number sequences for error correction: A
  digital signal processing problem,'' {\em IEEE Journal on Selected Areas in
  Communications}, vol.~2, pp.~381--392, March 1984.

\bibitem{wolf1983redundancy}
J.~K. Wolf, ``{Redundancy, the discrete Fourier transform, and impulse noise
  cancellation},'' {\em IEEE Transactions on Communications}, vol.~31,
  pp.~458--461, March 1983.

\bibitem{marvasti1999efficient}
F.~Marvasti, M.~Hasan, M.~Echhart, and S.~Talebi, ``{Efficient algorithms for
  burst error recovery using FFT and other transform kernels},'' {\em IEEE
  Transactions on Signal Processing}, vol.~47, pp.~1065--1075, April 1999.

\bibitem{rath2004subspace}
G.~Rath and C.~Guillemot, ``{Subspace algorithms for error localization with
  quantized DFT codes},'' {\em IEEE Transactions on Communications}, vol.~52,
  pp.~2115--2124, December 2004.

\bibitem{rath2004subspace1}
G.~Rath and C.~Guillemot, ``{Subspace-based error and erasure correction with
  DFT codes for wireless channels},'' {\em IEEE Transactions on Signal
  Processing}, vol.~52, pp.~3241--3252, November 2004.

\bibitem{gabay2007joint}
A.~Gabay, M.~Kieffer, and P.~Duhamel, ``{Joint source-channel coding using real
  BCH codes for robust image transmission},'' {\em IEEE Transactions on Image
  Processing}, vol.~16, pp.~1568--1583, June 2007.

\bibitem{Takos}
G.~Takos and C.~N. Hadjicostis, ``{Determination of the number of errors in DFT
  codes subject to low-level quantization noise},'' {\em IEEE Transactions on
  Signal Processing}, vol.~56, pp.~1043--1054, March 2008.

\bibitem{goyal2001quantized}
V.~K. Goyal, J.~Kova{\v{c}}evi{\'c}, and J.~A. Kelner, ``Quantized frame
  expansions with erasures,'' {\em Applied and Computational Harmonic
  Analysis}, vol.~10, no.~3, pp.~203--233, 2001.

\bibitem{rath2004frame}
G.~Rath and C.~Guillemot, ``{Frame-theoretic analysis of DFT codes with
  erasures},'' {\em IEEE Transactions on Signal Processing}, vol.~52,
  pp.~447--460, February 2004.

\bibitem{kovacevic2007life}
J.~Kova{\v{c}}evi{\'c} and A.~Chebira, ``{Life beyond bases: The advent of
  frames (Part I)},'' {\em IEEE Signal Processing Magazine}, vol.~24,
  pp.~86--104, July 2007.

\bibitem{casazza2003equal}
P.~Casazza and J.~Kova{\v{c}}evi{\'c}, ``Equal-norm tight frames with
  erasures,'' {\em Advances in Computational Mathematics}, vol.~18,
  pp.~387--430, February 2003.

\bibitem{kovacevic2008introduction}
J.~Kova{\v{c}}evi{\'c} and A.~Chebira, {\em An introduction to frames}.
\newblock Now Publishers, 2008.

\bibitem{vaezi2011DSC}
M.~Vaezi and F.~Labeau, ``Distributed lossy source coding using real-number
  codes,'' in {\em Proc. the 76th IEEE Vehicular Technology Conference, VTC
  Fall}, pp.~1--5, 2012.

\bibitem{dragotti2009distributed}
P.~L. Dragotti and M.~Gastpar, {\em {Distributed Source Coding: Theory,
  Algorithms, and Applications}}.
\newblock Academic Press, 2009.

\bibitem{vaezi2012WZ}
M.~Vaezi and F.~Labeau, ``{Wyner-Ziv Coding in the Real Field Based on BCH-DFT
  Codes},''  [Online]. Available: http://arxiv.org/abs/1301.0297.

\bibitem{mitra1998digital}
S.~K. Mitra and Y.~Kuo, {\em {Digital Signal Processing: A Computer-Based
  Approach}}.
\newblock New York: McGraw-Hill, 2006.

\bibitem{tucci2011eigenvalue2}
G.~H. Tucci and P.~A. Whiting, ``Asymptotic results on generalized vandermonde
  matrices and their extreme eigenvalues,'' in {\em Proc. the 49th Annual
  Allerton Conference on Communication, Control, and Computing},
  pp.~1816--1823, 2011.

\bibitem{tucci2011eigenvalue}
G.~H. Tucci and P.~A. Whiting, ``Eigenvalue results for large scale random
  vandermonde matrices with unit complex entries,'' {\em IEEE Transactions on
  Information Theory}, vol.~57, pp.~3938--3954, June 2011.

\bibitem{seber2008matrix}
G.~A.~F. Seber, {\em A Matrix Handbook for Statisticians}.
\newblock New Jersey: John Wiley \& Sons, 2008.

\bibitem{gray2006toeplitz}
R.~M. Gray, {\em Toeplitz and Circulant Matrices: A Review}.
\newblock Now Publishers, 2006.

\bibitem{bajcsy2001coding}
J.~Bajcsy and P.~Mitran, ``{Coding for the Slepian-Wolf problem with turbo
  codes},'' in {\em Proc. IEEE Global Telecommunications Conference
  (GLOBECOM)}, vol.~2, pp.~1400--1404, 2001.

\bibitem{garcia2001compression}
J.~Garcia-Frias and Y.~Zhao, ``{Compression of correlated binary sources using
  turbo codes},'' {\em IEEE Communications Letters}, vol.~5, pp.~417--419,
  October 2001.

\bibitem{aaron2002compression}
A.~Aaron and B.~Girod, ``Compression with side information using turbo codes,''
  in {\em Proc. IEEE Data Compression Conference}, pp.~252--261, 2002.

\bibitem{WZ}
A.~D. Wyner and J.~Ziv, ``The rate-distortion function for source coding with
  side information at the decoder,'' {\em IEEE Transactions on Information
  Theory}, vol.~22, pp.~1--10, January 1976.

\bibitem{Cover}
T.~M. Cover and J.~A. Thomas, {\em {Elements of Information Theory}}.
\newblock New York: John Wiley \& Sons, 2006.

\bibitem{boyd2004convex}
S.~Boyd and L.~Vandenberghe, {\em Convex optimization}.
\newblock Cambridge University Press, 2004.

\bibitem{chen2011mathematical}
T.~Chen, E.~Serpedin, and D.~Rajan, {\em Mathematical Foundations for Signal
  Processing, Communications, and Networking}.
\newblock CRC Press, 2011.

\bibitem{Vaezi2011LS}
M.~Vaezi and F.~Labeau, ``{Least squares solution for error correction on the
  real field using quantized DFT codes},'' in {\em Proc. the 20th European
  Signal Processing Conference (EUSIPCO)}, pp.~2561--2565, 2012.

\bibitem{thao1994consistent}
N.~T. Thao and M.~Vetterli, ``{Deterministic analysis of oversampled A/D
  conversion and decoding improvement based on consistent estimates},'' {\em
  IEEE Transactions on Signal Processing}, vol.~42, pp.~519--531, March 1994.

\bibitem{thao1994reduction}
N.~T. Thao and M.~Vetterli, ``{Reduction of the MSE in $R$-times oversampled
  A/D conversion $O(1/R)$ to $O(1/R^2)$},'' {\em IEEE Transactions on Signal
  Processing}, vol.~42, pp.~200--203, January 1994.

\bibitem{goyal1998quantized}
V.~K. Goyal, M.~Vetterli, and N.~T. Thao, ``{Quantized overcomplete expansions
  in $\mathbb{R}^N$: Analysis, synthesis, and algorithms},'' {\em IEEE
  Transactions on Information Theory}, vol.~44, pp.~16--31, January 1998.

\bibitem{labeau2005oversampled}
F.~Labeau, J.~C. Chiang, M.~Kieffer, P.~Duhamel, L.~Vandendorpe, and B.~Macq,
  ``Oversampled filter banks as error correcting codes: {T}heory and impulse
  noise correction,'' {\em IEEE Transactions on Signal Processing}, vol.~53,
  pp.~4619--4630, December 2005.

\end{thebibliography}
\end{document}